\documentclass[acmtog,nonacm,balance=false]{acmart}
\usepackage{amsmath}
\usepackage{algorithm}
\usepackage{algpseudocode}

\setcopyright{none}
\newcommand{\wopfigure}[2]{\IfFileExists{#1}{\includegraphics[width=\linewidth,height=#2,keepaspectratio]{#1}}{\fbox{\parbox[c][#2][c]{0.94\linewidth}{\centering\small Figure asset not included\\\texttt{\detokenize{#1}}}}}}
\begin{document}
\title{Walk-on-Ambient-Patch: Solver-Adapted Isometric Embeddings for Projected Walk on Spheres}
\author{Yunfeng Liao}
\author{Xiucheng Li}
\affiliation{\institution{Harbin Institute of Technology (Shenzhen)}\city{Shenzhen}\state{Guangdong}\country{China}}
\renewcommand{\shortauthors}{Liao and Li}
\begin{abstract}
Projected Walk on Spheres (PWoS) solves surface PDEs through ambient sampling and closest-point projection, but tightly folded embeddings can force short steps or ambiguous projections despite simple intrinsic geometry. We introduce \emph{Walk-on-Ambient-Patch} (\textbf{WoAP}), which treats the embedding as a solver variable. Edge-isometry constraints and tangent-point repulsion separate intrinsically distant regions, optionally in higher codimension; patch states track residual projection branches without introducing PDE boundaries. We analyze the resulting geometry--sampling tradeoff: additional dimensions provide separation freedom but reduce uniform tangent progress, while strong tangent-biased proposals incur large importance-weight moments. Folded-surface experiments demonstrate reduced projection conflicts, a separate hard-constraint pilot verifies metric preservation, and an internal fixed-RMSE comparison favors higher dimensions at stricter accuracy targets. Together, these results motivate choosing the embedding and projection state jointly with the stochastic solver.
\end{abstract}
\keywords{surface PDEs, Monte Carlo methods, projected walk on spheres, isometric embeddings, geometric optimization}

\begin{teaserfigure}
    \centering
    \includegraphics[width=0.9\textwidth]{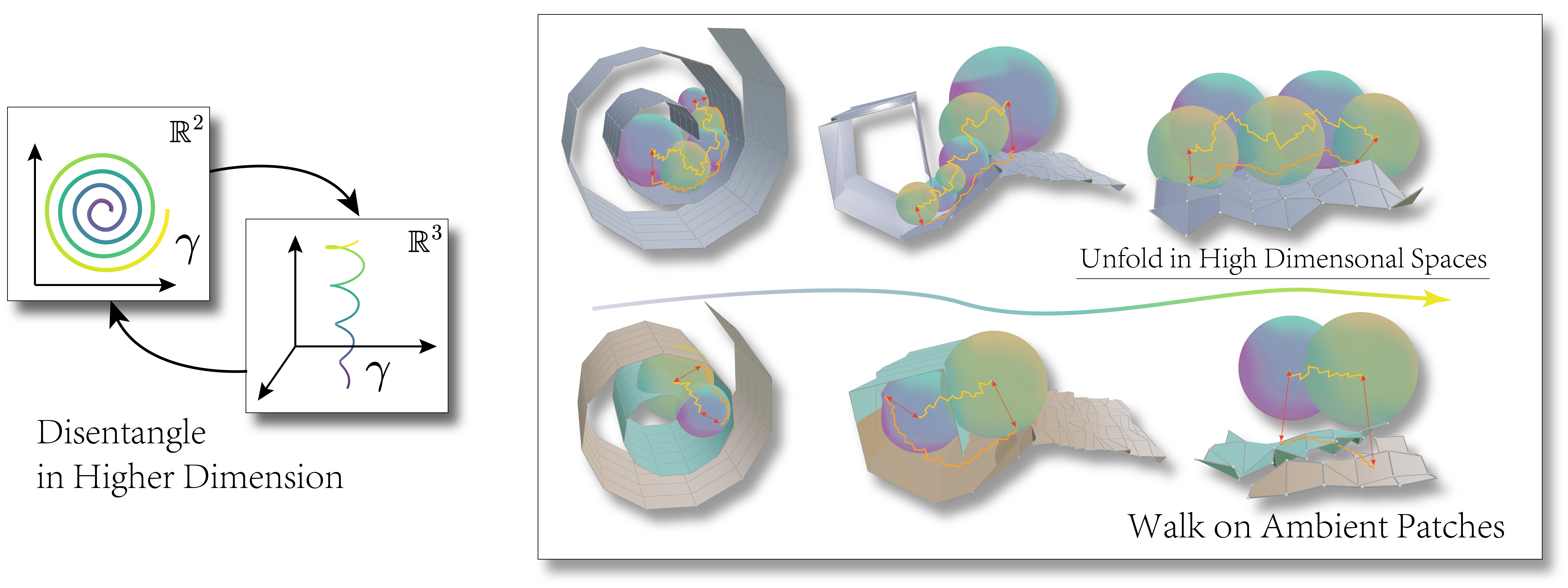}
    \caption{
        \textbf{Walk-on-Ambient-Patch.}
        Higher-dimensional re-embedding separates intrinsically distant regions while preserving the intrinsic metric; local ambient patches handle residual projection ambiguities. Left: higher-dimensional disentanglement. Right: solver-adapted re-embeddings of a Swiss-roll, allowing larger projected walks over local ambient patches as it unfolds.
    }
    \Description{Left: higher-dimensional disentanglement. Right: solver-adapted re-embeddings and projected walks over local ambient patches.}
    \label{fig:teaser}
\end{teaserfigure}

\maketitle
\thispagestyle{fancy}\fancyfoot[C]
{\footnotesize Preprint}

\section{Introduction}
\label{sec:introduction}
Pointwise Monte Carlo solvers make PDE-based geometry processing possible without constructing a volumetric mesh or solving for the entire field~\cite{SawhneyCrane2020MCGP}. Projected Walk on Spheres (PWoS) extends this idea to surfaces: sample an ambient neighborhood, project back to the surface, and recursively estimate the solution~\cite{Sugimoto2024PWoS}. Its geometric queries, however, expose a mismatch between the problem being solved and the representation used to solve it. The PDE depends on the intrinsic metric, whereas the usable projection neighborhood depends on the extrinsic embedding. The uniqueness radius of closest-point projection is formalized by the reach, which couples curvature and proximity to the medial axis~\cite{Federer1959CurvatureMeasures,Gray2004Tubes}. The admissible tubular neighborhood can shrink for two distinct reasons. Locally, curvature bounds the focal radius before normal characteristics intersect; globally, geodesically distant patches can approach each other in the ambient space, so their normal neighborhoods collide and the closest-point map becomes multivalued.

\begin{figure}
    \centering
    \includegraphics[width=0.8\linewidth]{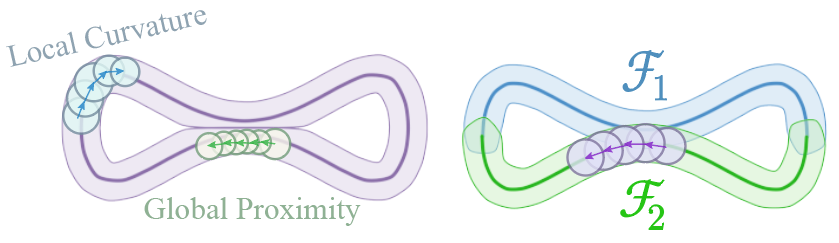}
    \caption{Two bottlenecks of projected walks. Local curvature limits the focal radius, while global self-proximity creates competing closest-point branches. Patch splitting such that $M=\mathcal F_1\cup\mathcal F_2$ alleviates the global bottleneck and enlarges valid walks, but local curvature still requires re-embedding.}
    \label{fig:tubular_neighborhood_patch}
\end{figure}

Consider a sheet folded into closely spaced layers. Diffusion between adjacent layers must follow the surface, even when their ambient separation is tiny. A global closest-point query at a large random endpoint can jump to a remote layer; restricting every step to a globally unambiguous neighborhood instead sacrifices intrinsic progress, as illustrated in Figure~\ref{fig:tubular_neighborhood_patch}. Unfolding or separating the layers can improve these queries without changing the intrinsic PDE. This motivates our central question: \emph{can we choose the embedding for the stochastic solver, rather than require the solver to accept the input embedding?}

Prior work provides the necessary ingredients but addresses different objectives. Closest-point methods and projected walks in higher dimensions solve PDEs on a given embedded manifold~\cite{RuuthMerriman2008ClosestPoint,Hui2026MPWoS}. Classical isometric-embedding results show that higher codimension provides additional degrees of freedom for metric-preserving embeddings~\cite{ChernKuiper1952IsometricEmbedding,Nash1956IsometricEmbedding}. Repulsive geometric optimization controls self-proximity, including under metric-aware deformation~\cite{Yu2021RepulsiveSurfaces,Sassen2024RepulsiveShells}. Geometry-processing deformation methods likewise preserve local shape, rigidity, or metric structure through rigidity, projection, or developability constraints~\cite{SorkineAlexa2007ARAP,BotschSorkine2008VariationalDeformation,BouazizEtAl2012ShapeUp,KilianEtAl2008CurvedFolding,JiangEtAl2020IsometricMappings}. We combine these viewpoints by optimizing extrinsic geometry subject to a fixed discrete intrinsic metric. While re-embedding separates globally competing surface regions and patch states resolve residual projection ambiguities, the additional degrees of freedom afforded by isometric embeddings may also alleviate local curvature constraints on admissible walk radii.

Our representation therefore has two coupled parts. First, edge-constrained tangent-point optimization separates nonlocal surface regions. Second, a phase-field partition assigns remaining competing branches distinct projection states. Most steps use one owner-restricted projection; near transitions, deterministic continuation follows the branch of the same random endpoint. Patch interfaces encode changes of projection branch. The stochastic backend retains the dimension-dependent Green recursion and numerical compensation of Hui et al.~\cite{Hui2026MPWoS}.

Geometric separation alone is not an adequate objective. For intrinsic dimension $m$ and ambient dimension $D$, a uniform direction carries only $m/D$ expected squared tangent component. Biasing directions toward the tangent space increases progress but introduces multiplicative importance weights. Our analysis quantifies this tradeoff in a flat local model and motivates evaluating the representation at fixed solution accuracy. We evaluate the resulting solver on both approximate and exact-isometry representations, and examine accuracy-dependent computational cost.

Our contributions are:
\begin{itemize}
\item \textbf{Solver-adapted embedding.} A formulation with edge constraints uses extrinsic freedom to reduce the projection conflicts that limit projected PDE walks.
\item \textbf{Projection-state representation.} Phase domains guided by conflicts and a fast/transition policy track residual local branches with the same stochastic sample, without adding PDE boundaries.
\item \textbf{Geometry--sampling analysis.} Exact formulas for tangent progress and importance weights, with dimension-scaling results, explain why clearance and walk length must be assessed jointly with sampling variance.
\end{itemize}
Sections~\ref{sec:related_work}--\ref{sec:preliminary} review prior work and define the projected recursion. Section~\ref{sec:embedding} develops the embedding and projection states, and Section~\ref{sec:analysis} analyzes their sampling tradeoff. Section~\ref{sec:experiments} evaluates the components and accuracy-dependent cost; the appendices provide proofs, implementation details, and expanded experiments.

\section{Related Work}
\label{sec:related_work}
\paragraph{Grid-free PDE solvers.}
Classical WoS~\cite{Muller1956WoS} was developed into a grid-free framework for geometry processing~\cite{SawhneyCrane2020MCGP}. Extensions address spatially varying coefficients~\cite{SawhneyEtAl2022VariableCoeff}, mixed Dirichlet--Neumann conditions~\cite{SawhneyEtAl2023WoSt}, and Robin conditions~\cite{MillerEtAl2024Robin}. Boundary-value caching~\cite{MillerEtAl2023BVC} and harmonic caching~\cite{ZhouEtAl2025HarmonicCaching} reuse estimates spatially. Walk on Decomposed Subdomains couples Monte Carlo local solution operators to a deterministic global solve~\cite{JambonEtAl2026WoDS}.WoAP instead changes the geometric representation used by the pointwise solver: its projection labels select closest-point branches of an auxiliary embedding.

\paragraph{Surface PDEs and closest-point extensions.}
Surface finite elements discretize the intrinsic differential operator~\cite{Dziuk1988Beltrami,DziukElliott2013SurfaceFEM}; the Closest Point Method and related embedding formulations extend surface quantities into an ambient neighborhood~\cite{BertalmioEtAl2001ImplicitSurfacePDE,RuuthMerriman2008ClosestPoint,MacdonaldRuuth2009ImplicitCP,MarzMacdonald2012ClosestPointCalculus}. PWoS combines this extension with pointwise stochastic recursion~\cite{Sugimoto2024PWoS}. Hui et al.~\cite{Hui2026MPWoS} develop a high-dimensional projected formulation with off-manifold compensation and dimension-dependent Green weights. We adopt this backend and optimize its geometric representation.

\paragraph{Repulsive geometry and projection-state partitions.}
Tangent-point energies provide a principled way to penalize nonlocal self-proximity in arbitrary codimension~\cite{Strzelecki2013TangentPoint}, while Repulsive Surfaces~\cite{Yu2021RepulsiveSurfaces} and Repulsive Shells~\cite{Sassen2024RepulsiveShells} develop efficient optimization and metric-aware deformation. Related mesh-deformation and developable-surface methods preserve local shape or metric structure under geometric editing~\cite{SorkineAlexa2007ARAP,BotschSorkine2008VariationalDeformation,BouazizEtAl2012ShapeUp,KilianEtAl2008CurvedFolding,JiangEtAl2020IsometricMappings}. Building on these methods, WoAP optimizes embedding clearance under intrinsic metric constraints. The remaining projection branches are encoded by a conflict-aware phase field, using graph regularization inspired by surface segmentation~\cite{Meny2021PhaseField,GarciaCardonaEtAl2014MulticlassGraph,DondlWojtowytsch2021Connectedness}. Its labels control the closest-point queries used by the stochastic solver: owner-restricted evaluation on fast faces and local continuation near transitions.

\section{Preliminaries}
\label{sec:preliminary}
\subsection{Intrinsic Problem and Probabilistic Representation}
Let $(M,g)$ be a compact $m$-manifold with physical boundary $\partial M$, embedded by $X:M\to\mathbb R^{D_0}$. We consider the surface Poisson problem with constant screening $\sigma\ge0$ and Dirichlet data $b$:
\begin{equation}
-\Delta_Mu+\sigma u=f\quad\text{in }M,\qquad u=b\quad\text{on }\partial M.
\label{eq:intrinsic_pde}
\end{equation}
For a diffusion $B_t$ with generator $\Delta_M$ and exit time $\tau$, the Feynman--Kac representation is
\begin{equation}
u(x)=\mathbb E_x\!\left[e^{-\sigma\tau}b(B_\tau)+\int_0^\tau e^{-\sigma t}f(B_t)\,dt\right].
\label{eq:fk}
\end{equation}
The generator convention fixes the time normalization. We identify $M$ with its original embedded realization
$X(M)\subset\mathbb R^{D_0}$ and seek an alternative isometric embedding $\iota:M\to\mathbb R^D$. Thus $\iota(x)$ maps each point $x$ from its original ambient position to a new realization, while $f$ and $b$ remain attached to the same material coordinates. 

Our implementation uses a triangle mesh $(V,E,F)$, where edge-length preservation retains the piecewise-flat intrinsic metric. More generally, the method only requires PWoS-compatible closest-point queries and local branch information. Explicit representations, including meshes, oriented point clouds, and surface-aligned 3D Gaussians, additionally need a neighborhood graph for intrinsic connectivity and a compact patch-ownership representation. Implicit surfaces instead require local projection/tangent/normal queries and an implicit phase field returning the mixture of candidate projection states at each spatial query.

\subsection{WoS and the Projected Green Recursion}
For a harmonic function in a Euclidean ball, the center value equals the uniform boundary average. WoS recursively samples this average, choosing balls inside the physical domain~\cite{Muller1956WoS}. PWoS instead extends the surface solution through a local closest-point map $\pi$~\cite{Sugimoto2024PWoS} where the admissible walking step at $x$ is limited by both the ambient distance tubular neighborhood boundary and the geodesic distance $\hat d_{\partial M}(x)$ to the boundary $\partial M$. Write $Eu=u\circ\pi$ and $Ef=f\circ\pi$. On a smooth tubular neighborhood, $\Delta_{\mathbb R^D}Eu=\Delta_Mu$ only on the surface, and the off-manifold tangential residual is
\begin{equation}
 \mathfrak m=-\Delta_{\mathbb R^D}Eu+\sigma Eu-Ef,\qquad \mathfrak m|_M=0.
\label{eq:compensation}
\end{equation}
The $D$-dimensional Green identity used by Hui et al.~\cite{Hui2026MPWoS} gives, at $y=\iota(x)$,
\begin{equation}
\begin{split}
u(x)={}&c_{\sigma,r}\mathbb E_{\omega\sim p_D}u(\pi(y+r\omega))+a_{\sigma,r}\mathbb E_{Z\sim\rho_{D,\sigma,r}}[f(\pi(Z))+\mathfrak m(Z)],
\end{split}
\label{eq:hui_local_recursion}
\end{equation}
where $p_D$ is uniform on $\mathbb S^{D-1}$, $\rho=G_{D,\sigma,r}/a_{\sigma,r}$ is the normalized positive Dirichlet Green density, and $ c_{\sigma,r}={(\sqrt\sigma r)^\nu}/({2^\nu\Gamma(\nu+1)I_\nu(\sqrt\sigma r)}),\allowbreak a_{\sigma,r}={(1-c_{\sigma,r})}/{\sigma},\allowbreak \nu=D/2-1.$ At $\sigma=0$, these limits are $c_{0,r}=1$ and $a_{0,r}=r^2/(2D)$. Thus changing $D$ changes both boundary and source weights. In practice, $\mathfrak m$ is replaced by a normal Taylor approximation $\mathfrak m^*$ obtained from an independent coarse predictor, following Hui et al.; Appendix~\ref{app:backend} specifies our realization. % In a nutshell, PWoS requires each integration ball to lie within a locally valid closest-point tubular neighborhood; our method enlarges and disambiguates these valid neighborhoods.

\section{Walk on Ambient Patch via Solver-Adapted Embeddings}
\label{sec:embedding}\label{sec:solver_embedding}\label{sec:wop}
Our goal is to enlarge the ambient steps available to PWoS while preserving the intrinsic PDE and avoiding projection jumps between nearby surface sheets. We first optimize an auxiliary embedding with edge-length constraints to separate intrinsically distant regions. Residual projection conflicts are then resolved by phase-field projection domains, with an offline fast/transition map determining where owner-restricted projection suffices and where local branch tracking is required. At runtime, we carry the surface point, seed face, and projection owner, and continue each sampled ambient step across patch interfaces without resampling. Intrinsic coefficients, sources, and boundary data remain attached to the original mesh through unchanged connectivity and barycentric material coordinates.

\subsection{Edge-Constrained Repulsive Embedding}
\label{sec:general_tp}

Our first objective is to enlarge the projection neighborhoods available to
PWoS without changing the intrinsic PDE. Since the intrinsic PDE is determined by the surface metric, we keep every input edge length $\ell_{ij}=\|x_i-x_j\|$ fixed and optimize only the extrinsic realization $Y=(y_i)_{i\in V}\in\mathbb R^{|V|\times D}$, where $y_i=\iota(x_i)$. The desired embedding should separate surface
regions that are intrinsically unrelated but become close in the ambient
space. Directly maximizing the minimum valid closest-point radius is
nonsmooth and depends on global projection events, so we instead optimize a
smooth repulsive surrogate under exact metric constraints, i.e., preserving discrete isometry:
\begin{equation}
\min_Y \mathcal E_{\rm rep}(Y)
\quad\text{s.t.}\quad
c_{ij}(Y):=\|y_i-y_j\|^2-\ell_{ij}^2=0,
\quad (i,j)\in E .
\label{eq:exact_isometry}
\end{equation}

\paragraph{Projection-clearance surrogate.}
For each face $f$, let $q_f$, $A_f$, and $P_T^f$ denote its barycenter, area,
and tangent projector, and let $P_N^f=I-P_T^f$. For a nonlocal face pair
$(f,g)$ with $d_{fg}=q_g-q_f$, we measure how strongly $g$ approaches the
normal bundle of $f$ by the tangent-point radius
\begin{equation}
R_{tp}(f,g)
=
\frac{\|d_{fg}\|^2}
     {2\|P_N^f d_{fg}\|}.
\label{eq:discrete_tp_radius}
\end{equation}
Small $R_{tp}(f,g)$ indicates precisely the configuration undesirable for PWoS:
two distinct surface regions approach each other in the ambient space along a
direction nearly normal to one of them. We therefore use the symmetric
discrete energy
\begin{equation}
\mathcal E_{\rm rep}(Y)
=
\frac12
\sum_{(f,g)\in\mathcal C(Y)}
A_fA_g
\left(
R_{tp}(f,g)^{-p}+R_{tp}(g,f)^{-p}
\right), 
\label{eq:discrete_tp_energy}
\end{equation} % p=6 ,
adapted from tangent-point repulsion
\cite{Strzelecki2013TangentPoint,Yu2021RepulsiveSurfaces}.
Here $\mathcal C(Y)$ contains nonlocal face pairs whose current ambient
separation is below a user-defined  interaction range $2R^*$; topologically local
pairs are excluded because they represent the intended intrinsic
neighborhood rather than a projection conflict. We rebuild this candidate set
during optimization using a spatial broad phase. Thus the repulsion is spent
only on pairs capable of limiting the target projection scale $R^*$. The resulting surrogate promotes larger usable projection neighborhoods while preserving intrinsic distances.

\paragraph{Metric-constrained optimization.}
Let $J=\partial c/\partial Y$ be the Jacobian of all edge constraints and
$h=\nabla_Y\mathcal E_{\rm rep}$. We first remove the component of the
repulsive gradient that changes edge lengths, $h_{\parallel}=h-J^\top(JJ^\top)^\dagger Jh ,$
take a descent step $Y\leftarrow Y-\eta h_{\parallel}$, and then retract back
to the isometric constraint manifold by repeated Newton projection, $Y\leftarrow Y-J^\top(JJ^\top)^\dagger c(Y),$  until the maximum edge residual reaches the prescribed tolerance. This is an
equality-constrained projected-gradient scheme
\cite{NocedalWright2006NumericalOptimization}; the contribution here is the
solver-specific objective and its coupling to the fixed intrinsic metric.

For $D>D_0$, the zero-padded input
$y_i=(x_i,0,\ldots,0)$ is already feasible but is symmetric with respect to
the added coordinates. We therefore initialize the new coordinates with a
small perturbation and immediately project back to
$c(Y)=0$, allowing subsequent descent to exploit the additional ambient
degrees of freedom without changing edge lengths. In our hard-constraint
experiments, this procedure preserves the piecewise-flat metric to numerical
tolerance.

\paragraph{Scalable approximate projection.}
For the large-scale sweep, we replace the global Newton retraction by independent edge-wise projections. Although Eq.~\eqref{eq:exact_isometry} uses the squared constraint, it has the same feasible set as $C_{ij}(Y):=\|y_i-y_j\|-\ell_{ij}=0$. Let $d_{ij}=\|y_i-y_j\|$, $n_{ij}=(y_i-y_j)/d_{ij}$. For one edge, we compute the minimum endpoint displacement that exactly restores its target length:
\begin{equation}
\begin{aligned}
\min_{\delta_i,\delta_j}&
\frac12\|\delta_i\|^2+\frac12\|\delta_j\|^2
\quad\text{s.t.}\quad
\|(y_i+\delta_i)-(y_j+\delta_j)\|=\ell_{ij},
\end{aligned}
\label{eq:pbd_local_projection}
\end{equation}
where $\delta_i=-\alpha n_{ij}$ is the correction.
By symmetry, the minimum-displacement solution preserves the edge midpoint, so $\delta_j=-\delta_i$. Substitution into the constraint gives $\|(d_{ij}-2\alpha)n_{ij}\|=\ell_{ij}$ and hence $\alpha=(d_{ij}-\ell_{ij})/2$, $\delta_i=\frac12(d_{ij}-\ell_{ij})(y_j-y_i)/d_{ij}$. We regularize the denominator in implementation and use
\begin{equation}
\Delta_{ij}
=\frac12\frac{d_{ij}-\ell_{ij}}     {d_{ij}+\epsilon}(y_j-y_i),
\qquad
\Delta_{ji}=-\Delta_{ij}.
\label{eq:pbd_edge_projection}
\end{equation}
After each repulsive descent step, we iteratively apply the edge projections in Eq.~\eqref{eq:pbd_edge_projection} using Jacobi sweeps. Because multiple edge constraints share vertices, the full edge-length system is only approximately restored.

% \subsection{Projection Domains and Transition Classification}
% \label{sec:phase_domains}
\subsection{Conflict-Aware Phase-Field Partitioning}

Re-embedding removes most nonlocal projection conflicts, but residual competing closest-point branches may remain. We therefore prescribe $K$ \emph{projection owners} and partition the surface into $K$ locally coherent owner regions. An owner identifies which local closest-point branch should be followed; owner interfaces are solver states rather than physical boundaries of the PDE.

We first construct a residual conflict graph $\mathcal C$ from surface regions that are ambiently close but sufficiently separated in graph-geodesic distance. With $R^*$ denoting the target projection-radius cap, $w_{\max}$ capping excessively large conflict weights, and $p_c>0$ the clearance sensitivity exponent, for each candidate pair $(i,j)$,  we define
\begin{align}
    R_{\rm sym}(i,j)& =\min\{R_{\rm tp}(i,j),R_{\rm tp}(j,i)\}, \\
w_{ij}& =\min\{w_{\text{max}},(R^*/R_{\rm sym}(i,j))^{p_c}\},
\label{eq:phase_conflict_weight}
\end{align}
so pairs with smaller projection clearance receive stronger separation penalties. Each vertex carries a soft owner assignment $\Phi_i=(\phi_i^1,\ldots,\phi_i^K)$ with $\phi_i^a\ge0$ and $\sum_a\phi_i^a=1$. With $\widetilde w$ normalized to unit mean, our solver-specific phase objective is
\begin{equation}
E_{\rm conf}(\Phi)=\frac1{|\mathcal C|}\sum_{(i,j)\in\mathcal C}\widetilde w_{ij}\,\Phi_i^\top\Phi_j,
\label{eq:phase_conflict_energy}
\end{equation}
which directly penalizes assigning competing projection branches to the same owner. Conflict separation alone can produce fragmented or diffuse owner fields, so we optimize
\begin{equation}
E_{\rm phase}(\Phi)=\lambda_{\rm conf}E_{\rm conf}(\Phi)+\lambda_{\rm reg}\mathcal R(\Phi),
\label{eq:phase_total_energy}
\end{equation}
where $\mathcal R$ is a fixed multiphase regularizer and its internal implementation weights are absorbed into its definition. Specifically, $\mathcal R$ combines graph smoothness $\sum_{(i,j)\in E}\|\Phi_i-\Phi_j\|_2^2/\ell_{ij}$, a multiwell term $\sum_{i,a}(\phi_i^a)^2(1-\phi_i^a)^2$, area balance $\sum_a(m_a-A/K)^2$ with $A=\sum_iA_i$ and $m_a=\sum_iA_i\phi_i^a$, and seed anchoring $-\sum_a\log(\phi_{s_a}^a+\epsilon)$. We further encourage compact and connected owner regions using $\sum_{i,a}A_i\phi_i^a\bar d_a(i)^2$ and $\sum_{i,a}A_i\phi_i^a\bar d_a^{\rm conn}(i)$, where $s_a$ is the geodesic farthest-point seed of owner $a$, $\bar d_a$ is its normalized graph-geodesic distance, and $\bar d_a^{\rm conn}$ is a phase-dependent shortest-path distance that penalizes paths through low-membership vertices. Following phase-field connectedness regularization~\cite{DondlWojtowytsch2021Connectedness}, these distances are periodically recomputed with Dijkstra~\cite{Dijkstra1959ShortestPath} and frozen between optimization updates. % Softmax logits enforce the simplex constraint.

\paragraph{Initialization and optimization.}
We initialize the $K$ owner fields from geodesic farthest-point seeds $s_1,\ldots,s_K$. Let $d_G(i,s_a)$ denote the graph-geodesic distance from vertex $i$ to seed $s_a$; the initial logits are chosen as a decreasing function of $d_G(i,s_a)$ and normalized by softmax, yielding a soft geodesic Voronoi partition. We optimize these logits directly and set $\Phi_i=\operatorname{softmax}(z_i)$ throughout optimization, which automatically enforces $\phi_i^a\ge0$ and $\sum_a\phi_i^a=1$. % Phase-dependent shortest-path distances in the connectedness regularizer are periodically recomputed and frozen between updates.

After optimization, each face receives the hard owner
\begin{equation}
o_f
=
\arg\max_{a\in\{1,\ldots,K\}}
\sum_{i\in f}\phi_i^a .
\label{eq:hard_owner}
\end{equation}
For owner $a$, let
$\mathcal F_a=\{f\in F:o_f=a\}$ denote its owned face set. The corresponding
owner-restricted closest-point projector is
\begin{equation}
\Pi_a(q)
=
(x^*,\tau^*),
\qquad
(x^*,\tau^*)
\in
\arg\min_{\substack{x\in\tau, \tau\in\mathcal F_a}}
\|\iota(x)-q\|^2 ,
\label{eq:owner_projector}
\end{equation}
where $x^*\in M$ is the projected surface point in material coordinates, and
$\tau^*\in F$ is its landing face. Thus $\Pi_a$ is an ordinary closest-point
query restricted to the primitives assigned to owner $a$. The phase field therefore determines the admissible projection branch for each query.

\subsection{Transition Map Construction}
A hard owner is sufficient only when the owner-restricted projection remains on the same branch throughout a full admissible step. To identify faces where a step may cross into another owner, we trace projected paths along a fixed set of ambient
directions $\Omega=\{\omega_1,\ldots,\omega_{N_\Omega}\}\subset\mathbb S^{D-1}$. Since such transitions primarily occur near owner interfaces, termed \textit{owner seams}, which is an edge shared by two faces with different hard owners.  We test faces within an \textit{H}-ring neighborhood of these seams, where branch changes are most likely to occur. 

Let $F_{\rm band}\subset F$ contain all faces within $H$ face-adjacency hops of these seams. For each face $f\in F_{\rm band}$, let
$x_f\in f$ denote its barycenter in material coordinates, so $\iota(x_f)\in\mathbb R^D$ is its ambient position under the optimized
embedding. For each$f\in F_{\rm band}$ and $\omega\in\Omega$, we consider the ambient segment with $t\in[0,1]$:
\begin{equation}
q_{f,\omega}(t)
=
\iota(x_f)+t\cdot r_f\cdot \omega,\quad  r_f
=
\min\{
R^*,
(1-\delta)\widehat d_{\partial M}(x_f)\}, 
\label{eq:transition_test_segment}
\end{equation}
This segment is sampled at the deterministic parameters
$t_\ell=\ell/N_t$, $\ell=1,\ldots,N_t$. To follow one projection branch along this segment, we use a seeded local
closest-point operator
\begin{equation}
\begin{split}
\operatorname{CP}_{\rm local}(q;\tau)
=(x^*,\tau^*) = \arg\min_{\substack{x \in  f\in\mathcal N_h(\tau)}}
\|\iota(x)-q\|^2.
\end{split}
\label{eq:local_cp}
\end{equation}
where $q\in\mathbb R^D$ is an ambient query and $\tau\in F$ is a seed face.
The solve searches only the local face neighborhood connected to $\tau$ and
returns both the projected material point $x^*\in M$ and its landing face
$\tau^*\in F$. With $\mathcal N_h(\tau)$ denoting all faces within $h$ adjacency rings of
$\tau$, starting from $\tau_0=f$, we recursively evaluate
\begin{equation}
(x_\ell,\tau_\ell)
=
\operatorname{CP}_{\rm local}
\!\left(q_{f,\omega}(t_\ell);\tau_{\ell-1}\right),
\qquad
\ell=1,\ldots,N_t .
\label{eq:offline_transition_track}
\end{equation}
Each solve is therefore seeded by the previous landing face. This is what we
mean by \emph{local continuation}: rather than independently recomputing a
global closest point at every $q_{f,\omega}(t_\ell)$, we advance the query
along the ambient segment while continuously following the locally connected
projection branch.

% Mark $f$ as a raw transition face if there exist $\omega\in\Omega$ and $\ell\in\{1,\ldots,N_t\}$ such that $o_{\tau_\ell}\neq o_f .$ Otherwise $f$ is a raw fast candidate. Owner-seam faces are included in the raw transition set by construction. Hence $\widetilde F_{\rm trans}$ consists of tested faces for which at least one discrete segment of ambient length $r_f$ can continuously reach a face owned by another projection branch.

Let $F_{\rm cross}$ be the tested faces whose tracked paths
reach a different owner. We include the owner-seam faces and set
$\widetilde F_{\rm trans}=F_{\rm cross}\cup F_{\rm seam}$.
The final transition map is
$F_{\rm trans}=\operatorname{Dilate}_1(\widetilde F_{\rm trans})$
and $F_{\rm fast}=F\setminus F_{\rm trans}$.

% The construction is deterministic because the candidate faces $F_{\rm band}$, direction set $\Omega$, continuation parameters $t_\ell$, initial seed face $f$, and local closest-point rule are all fixed during preprocessing. No Monte Carlo variable is sampled in constructing $\widetilde F_{\rm trans}$.

We construct the transition map using finite directional tests and seeded local continuation. On the 2,610-face Swiss roll, four-ring seam-band testing reproduces the full-face finite-direction classification while reducing local closest-point queries from 157,456 to 38,608 (Table~\ref{tab:transition_preprocess_cost}).

% Though only finitely many directions and finitely many points along each segment are tested, this map is accurate enough for WoAP as an empirical classifier. Besides, testing only the seam neighborhood reduces preprocessing cost, but may miss transitions originating farther away. However, as in Table~\ref{tab:transition_preprocess_cost}, both produce identical transition maps on the tested geometry, while seam-band filtering substantially reduces local closest-point evaluations.

\begin{table}[t]
\centering
\caption{Cost of offline fast/transition preprocessing on the 2,610-face Swiss roll. We compare exhaustive finite-direction continuation on every face with testing only the four-ring owner-seam band; both produce the same transition map, while the filtered test uses fewer local closest-point solves.}
\label{tab:transition_preprocess_cost}
\small
\begin{tabular}{lrrrr}
\toprule
Strategy & Tested faces & Local CP calls & Time [s] & Agreement \\
\midrule
All primitives & 2610 & 157456 & 1.036 & 100\% \\
4-ring seam band & 753 & 38608 & 0.265 & 100\% \\
\bottomrule
\end{tabular}
\end{table}

\begin{figure}[t]
\centering
\includegraphics[width=0.9\linewidth]{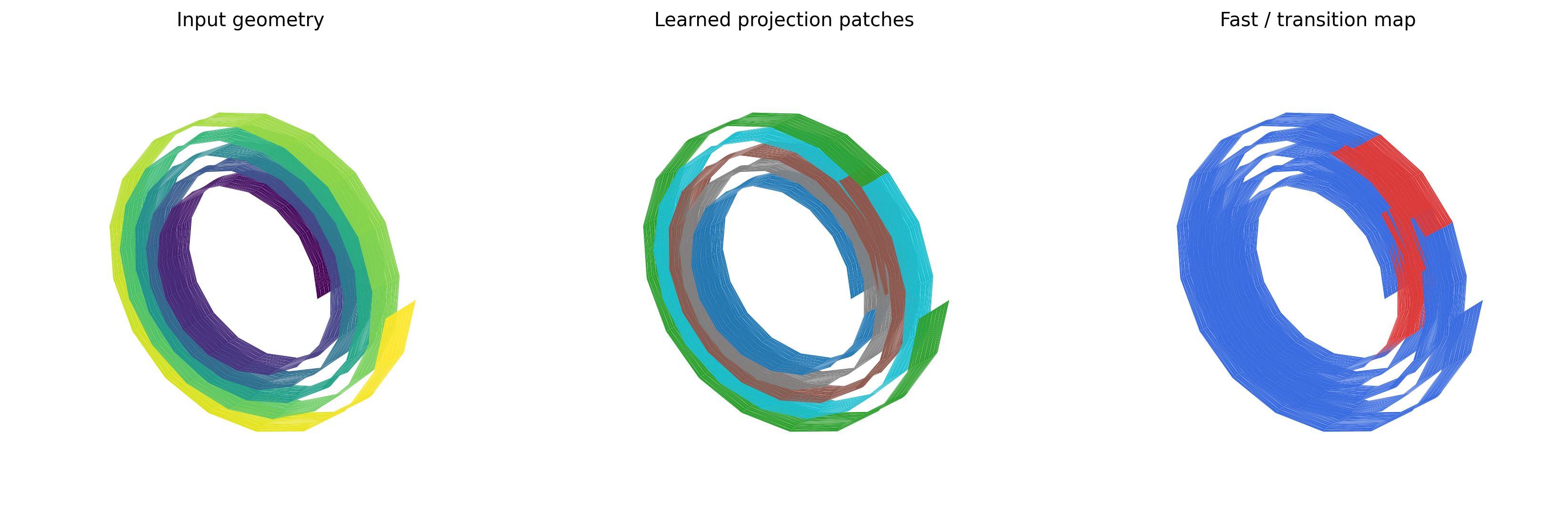}
\caption{Projection-owner partition and offline fast/transition map on a Swiss roll with five patches. }
\label{fig:exp-swiss-geometry}
\Description{Swiss-roll geometry, learned projection patches, and offline fast/transition decomposition.}
\end{figure}

As summarized in Algorithm~\ref{alg:preprocess}, the offline representation therefore consists of the optimized embedding $\iota$, the soft owner field $\Phi$, the hard face labels $\{o_f\}_{f\in F}$, the owner-restricted projectors $\{\Pi_a\}_{a=1}^K$, and the final fast/transition map. An exemplary preprocessed Swiss roll is illustrated in Figure~\ref{fig:exp-swiss-geometry}.

\begin{algorithm}[t]
\caption{Offline preprocessing for WoAP}
\label{alg:preprocess}
\begin{algorithmic}[1]
\Require Mesh $(V,E,F)$, initial embedding
$\iota_0$, target dimension $D$,
owner count $K$, cap $R^*$, test directions $\Omega$,
seam-band width $H$, continuation count $N_t$,
phase iterations $T_{\rm phase}$
\Ensure Optimized embedding $\iota$, phase field $\Phi$,
face owners $\{o_f\}$, projectors $\{\Pi_a\}_{a=1}^K$,
$F_{\rm fast}$, $F_{\rm trans}$,
$\mathfrak m^*$, $\widehat d_{\partial M}$

\State Optimize $\iota:M\rightarrow\mathbb R^D$ from $\iota_0$
using the repulsive embedding solver
\State Build the residual projection-conflict graph $\mathcal C$
\State Choose $K$ geodesic farthest-point seeds and initialize $\Phi$

\For{$t=0,\ldots,T_{\rm phase}-1$}
    \State Periodically refresh the phase-dependent shortest-path distances
    \State Descend on Eq.~\eqref{eq:phase_total_energy}
\EndFor

\State Assign
$o_f\gets\arg\max_a\sum_{i\in f}\phi_i^a$ for every $f\in F$
\State Build $\mathcal F_a=\{f:o_f=a\}$ and $\Pi_a$ for
$a=1,\ldots,K$
\State Build the physical-boundary distance estimate
$\widehat d_{\partial M}$
\State Build the owner-seam $H$-ring candidate set $F_{\rm band}$
\State Initialize $\widetilde F_{\rm trans}$ with all owner-seam faces

\ForAll{$f\in F_{\rm band}$}
    \State
    $r_f\gets
    \min\{R^*,(1-\delta)\widehat d_{\partial M}(x_f)\}$
    \ForAll{$\omega\in\Omega$}
        \State $\tau\gets f$
        \For{$\ell=1,\ldots,N_t$}
            \State
            $q_\ell\gets
            \iota(x_f)+(\ell/N_t)r_f\omega$
            \State
            $(x,\tau)\gets
            \operatorname{CP}_{\rm local}(q_\ell;\tau)$
            \If{$o_\tau\neq o_f$}
                \State Add $f$ to $\widetilde F_{\rm trans}$
                \State \textbf{break}
            \EndIf
        \EndFor
        \If{$f\in\widetilde F_{\rm trans}$}
            \State \textbf{break}
        \EndIf
    \EndFor
\EndFor

\State
$F_{\rm trans}\gets
\operatorname{Dilate}_1(\widetilde F_{\rm trans})$
\State
$F_{\rm fast}\gets F\setminus F_{\rm trans}$

\State Build the coarse predictor and numerical compensation $\mathfrak m^*$

\State \Return
$(\iota,\Phi,\{o_f\},\{\Pi_a\},
F_{\rm fast},F_{\rm trans},
\mathfrak m^*,\widehat d_{\partial M})$
\end{algorithmic}
\end{algorithm}
\subsection{Same-Sample Projection Continuation}
\label{sec:branch}

An owner-restricted projector cannot cross a patch interface: its landing face necessarily retains the current owner. Transition states therefore require a different search rule that permits owner changes through mesh adjacency while avoiding unrestricted jumps between nearby sheets. Algorithm~\ref{alg:walk_on_patch} summarizes the resulting runtime procedure.

%  irrespective of their owners. We define the seeded local projection
% by Eq.~\ref{eq:local_cp}. Unlike $\Pi_a$, this query does not restrict candidate faces to owner $a$.
% It can therefore cross an owner seam through the intrinsic mesh
% neighborhood, while excluding intrinsically remote faces outside that
% neighborhood. The neighborhood size and continuation resolution are
% implementation parameters specified in the appendix.

The runtime state is $(x_k,\tau_k,a_k)$, with $a_k=o_{\tau_k}$ derived
from the containing face. Given the direction-independent radius $r_k=\min\{R^*,(1-\delta)\widehat d_{\partial M}(x_k)\},$
and one sampled direction $\omega_k$, define
\begin{equation}
q_k(t)=\iota(x_k)+t\cdot r_k\omega_k,\qquad t\in[0,1].
\label{eq:runtime_transition_segment}
\end{equation}
On a fast face, one owner-restricted query evaluates the endpoint:
\begin{equation}
(x_{k+1},\tau_{k+1})=\Pi_{a_k}(q_k(1)),
\qquad a_{k+1}=o_{\tau_{k+1}}=a_k.
\label{eq:fast_projection}
\end{equation}
Thus a fast step retains its owner, although its landing face may enter
the transition set.

On a transition face, we follow the same endpoint through deterministic
substeps. Initialize $(\tau^{(0)},a^{(0)})=(\tau_k,a_k)$ and compute
\begin{equation}
\begin{split}
(x^{(\ell)},\tau^{(\ell)})
&=\operatorname{CP}_{\rm local}
\left(q_k(\ell/N_t);\tau^{(\ell-1)}\right),\\
a^{(\ell)}&=o_{\tau^{(\ell)}},
\qquad \ell=1,\ldots,N_t.
\end{split}
\label{eq:runtime_projection_continuation}
\end{equation}
A patch switch occurs precisely when
$a^{(\ell)}\ne a^{(\ell-1)}$. The next local query is seeded by
$\tau^{(\ell)}$, so its candidate neighborhood moves onto the newly
reached patch rather than remaining restricted to the initial owner.
The final state is
\begin{equation}
(x_{k+1},\tau_{k+1},a_{k+1})
=(x^{(N_t)},\tau^{(N_t)},a^{(N_t)}).
\end{equation}

\begin{figure}
    \centering
    \includegraphics[width=0.5\linewidth]{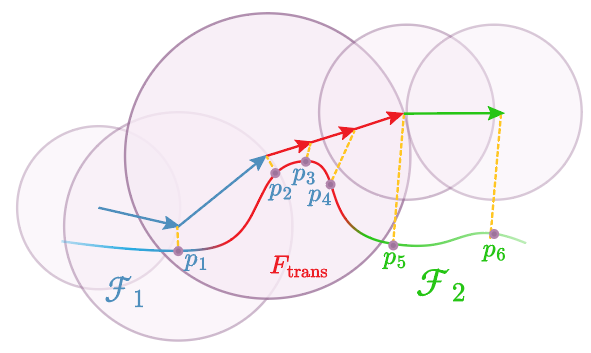}
    \caption{Same-sample continuation across projection owners. Tracking starts at $p_2$ upon entering a transition region; the landing face changes owner at $p_5$.}
    \label{fig:same_sample_proj}
\end{figure}

The owner is updated after each continuation substep, whereas the fast/transition mode is selected only once per stochastic step. Each substep updates the local search neighborhood around the previous landing face, allowing the projection to follow a branch that may lie outside the initial neighborhood. Once tracking begins, it continues to the sampled endpoint even if an intermediate face is fast, since switching to owner-restricted projection may select a different branch, as illustrated in Figure~\ref{fig:same_sample_proj}.

% The endpoint is sampled once; patch switches introduce neither a new
% direction nor an additional importance factor. Finite-step local
% continuation is nevertheless a numerical branch-tracking rule, not a
% certificate of continuity or a globally valid closest-point extension.

The projection rules above supply the next walk state. To evaluate the PDE solution, we combine them with the $D$-dimensional Green recursion of Section~\ref{sec:preliminary}. An independent interior sample $Z_k\sim\rho_{D,\sigma,r_k}$ is locally projected to $z_k=\pi_{\tau_k}(Z_k)$ for source evaluation. The source accumulator $S_k$ and path weight $W_k$ then satisfy 
\begin{equation}
\begin{split}
S_{k+1}
&=S_k+W_k a_{\sigma,r_k}
[f(z_k)+\mathfrak m^*(Z_k)],\\
W_{k+1}
&=W_k c_{\sigma,r_k}w_k,
\qquad \widehat u=S_T+W_Tb(x_T).
\end{split}
\label{eq:hui_wop_source_update}
\end{equation}

\begin{algorithm}[t]
\caption{WoAP runtime and patch switching}
\label{alg:walk_on_patch}
\begin{algorithmic}[1]
\Require Initial $(x,\tau)$, embedding $\iota$, owners $\{o_f\}$,
projectors $\{\Pi_a\}$, $F_{\rm fast}$, $F_{\rm trans}$,
$R^*$, $N_t$, $\widehat d_{\partial M}$, $\mathfrak m^*$
\Ensure One Monte Carlo path contribution $\widehat u$
\State $a\gets o_\tau$; $W\gets1$; $S\gets0$
\While{the physical boundary stopping criterion is not met}
    \State $r\gets\min\{R^*,(1-\delta)\widehat d_{\partial M}(x)\}$
    \State Draw $Z\sim\rho_{D,\sigma,r}$; $z\gets\pi_\tau(Z)$
    \State $S\gets S+Wa_{\sigma,r}[f(z)+\mathfrak m^*(Z)]$
    \State Draw one angular sample $\omega$ and weight $w$
    \State $q\gets\iota(x)+r\omega$
    \If{$\tau\in F_{\rm fast}$}
        \State $(x^+,\tau^+)\gets\Pi_a(q)$
        \State $a^+\gets o_{\tau^+}$
    \Else
        \State $(x^+,\tau^+,a^+)\gets
        \Call{TrackProjection}{\iota(x),q,\tau,N_t}$
    \EndIf
    \State $W\gets Wc_{\sigma,r}w$
    \State $(x,\tau,a)\gets(x^+,\tau^+,a^+)$
    \Comment{next step selects its mode from $\tau$}
\EndWhile
\State \Return $\widehat u=S+Wb(x)$

\Function{TrackProjection}{$q_0,q_1,\tau_0,N_t$}
    \State $\tau\gets\tau_0$; $a\gets o_\tau$
    \For{$\ell=1,\ldots,N_t$}
        \State $q_\ell\gets(1-\ell/N_t)q_0+(\ell/N_t)q_1$
        \State $\mathcal U\gets\mathcal N_h(\tau)$
        \Comment{includes faces of other owners}
        \State $(x^+,\tau^+)\gets
        \arg\min_{f\in\mathcal U,\;x\in f}\|\iota(x)-q_\ell\|^2$
        \State $a^+\gets o_{\tau^+}$
        \If{$a^+\ne a$}
            \State $a\gets a^+$
            \Comment{switch to the landing-face patch}
        \EndIf
        \State $(x,\tau)\gets(x^+,\tau^+)$
        \Comment{move the next search neighborhood}
    \EndFor
    \State \Return $(x,\tau,a)$
\EndFunction
\end{algorithmic}
\end{algorithm}

\section{Geometry--Sampling Tradeoff}
\label{sec:analysis}\label{sec:codim_order}
\subsection{Tangent Progress and Importance Weighting}
\label{sec:vmf_method}
Increasing $D$ gives the embedding more degrees of freedom to separate nearby
layers, but it also changes the geometry of the random walk. The PDE evolves
only along the $m$-dimensional surface, whereas a uniform ambient direction in
$\mathbb R^D$ increasingly points into the normal bundle as $D$ grows; that
component is discarded after projection and produces no intrinsic progress.
Thus higher codimension is useful only when the resulting gain in projection
clearance compensates for this loss of tangent motion.
Theorem~\ref{thm:uniform} quantifies this tradeoff and provides the baseline
against which tangent-biased sampling is evaluated.

\begin{theorem}[Uniform tangent-energy law]
\label{thm:uniform}
Let $T$ be an $m$-dimensional linear subspace of $\mathbb R^D$, $D>m$, and
$\omega\sim\mathrm{Unif}(\mathbb S^{D-1})$. Then $S:=\|P_T\omega\|^2
\sim \mathrm{Beta}\!\left(\frac m2,\frac{D-m}{2}\right)$  and $\mathbb E[S]=\frac mD .$
Hence, for iid flat projected increments
$X_{k+1}-X_k=rP_T\omega_k$, $\mathbb E\|X_N-X_0\|^2=Nr^2\frac mD .$
\end{theorem}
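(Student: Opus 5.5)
We will use the Gaussian representation of the uniform distribution on the sphere. By rotation invariance of $\mathrm{Unif}(\mathbb S^{D-1})$, choose an orthonormal basis adapted to $T$, so that $P_T$ keeps the first $m$ coordinates. Let $G=(G_1,\dots,G_D)\sim\mathcal N(0,I_D)$ and set $\omega=G/\|G\|$, which is uniform on the sphere. Then
\[
S=\frac{A}{A+B},\qquad A=\sum_{i\le m}G_i^2\sim\chi^2_m,\quad B=\sum_{i>m}G_i^2\sim\chi^2_{D-m},
\]
with $A$ and $B$ independent. Writing $\chi^2_k=2\,\mathrm{Gamma}(k/2,1)$, the standard Gamma--Beta lemma applies: if $U\sim\mathrm{Gamma}(\alpha)$ and $V\sim\mathrm{Gamma}(\beta)$ are independent, then $U/(U+V)\sim\mathrm{Beta}(\alpha,\beta)$. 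This gives $S\sim\mathrm{Beta}(m/2,(D-m)/2)$; the hypothesis $D>m$ ensures that both parameters are positive. If one prefers to avoid citing the lemma, a direct change of variables $(u,v)\mapsto(s,u+v)$ in the joint density factorizes it and shows the Beta density explicitly. This is the only step with any real content.

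The mean follows from the Beta mean $\alpha/(\alpha+\beta)=m/D$. Alternatively, by symmetry $\mathbb E[\omega_i^2]=1/D$ for each $i$, since the coordinate squares are exchangeable and sum to $1$. Summing over the $m$ tangent coordinates gives $\mathbb E[S]=m/D$ without any distributional computation.

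For the walk, write $X_N-X_0=r\sum_{k<N}P_T\omega_k$ and expand the squared norm. The cross terms are $r^2\,\mathbb E\langle P_T\omega_j,P_T\omega_k\rangle$ for $j\ne k$. They vanish by independence and because $\mathbb E[\omega]=0$, which holds by the symmetry $\omega\mapsto-\omega$, so $\mathbb E[P_T\omega]=0$. The diagonal terms each contribute $r^2\,\mathbb E[S]=r^2m/D$, which yields $Nr^2m/D$. We expect no obstacle beyond stating the Gamma--Beta lemma cleanly. The flat model is used only so that $P_T$ is a fixed projector independent of the walk's position, which is what lets the increments be i.i.d.
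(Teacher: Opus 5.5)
Your proposal is correct and follows the paper's proof essentially step for step: rotate so that $T$ is spanned by coordinate vectors, write $\omega$ as a normalized standard Gaussian, identify $S$ as a ratio of independent $\chi^2_m$ and $\chi^2_{D-m}$ variables to obtain the Beta law and the mean $m/D$, and make the cross terms vanish using independence and antipodal symmetry. Your exchangeability argument for $\mathbb E[S]=m/D$ is a neat distribution-free shortcut, but it is a minor variation rather than a different route.
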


\begin{proof}
By rotational invariance, assume
$T=\operatorname{span}\{e_1,\ldots,e_m\}$ and write
$\omega=Z/\|Z\|$, where $Z_i\stackrel{\mathrm{iid}}{\sim}\mathcal N(0,1)$.
Then $S=\frac{\sum_{i=1}^{m}Z_i^2}
{\sum_{i=1}^{D}Z_i^2}
=
\frac{U}{U+V}$ with $U\sim\chi_m^2,V\sim\chi_{D-m}^2,$ with $U$ and $V$ independent. The standard ratio characterization gives
$S\sim\mathrm{Beta}(m/2,(D-m)/2)$ and $\mathbb E[S]=m/D$. $\mathbb E[P_T\omega_k]=0$ by antipodal symmetry, so all cross terms vanish:
$\mathbb E\|X_N-X_0\|^2=r^2\mathbb E\left\|\sum_{k=1}^{N}P_T\omega_k\right\|^2
=Nr^2\,\mathbb E\|P_T\omega\|^2=Nr^2\frac mD.$
\end{proof}

\begin{theorem}[Curvature-corrected projected-step law]
\label{thm:curved-step}
Let $\iota:M^m\to\mathbb R^D$ be a smooth isometric embedding and let $\pi$ be a
smooth closest-point branch around $\iota(x)$. For $\omega\sim\mathrm{Unif}(\mathbb S^{D-1})$, define
$\xi_r(\omega)=\exp_x^{-1}(\pi(\iota(x)+r\omega))\in T_xM$.
For sufficiently small $r$, 
\begin{equation}
\mathbb E[\xi_r\otimes\xi_r]=\frac{r^2}{D}I_{T_xM}+\frac{r^4}{3D(D+2)}\left(3A_{\mathcal H}-5\operatorname{Ric}^{\sharp}\right)+O(r^6),
\label{eq:curved_step_covariance}
\end{equation}
where the mean curvature vector $\mathcal H=\operatorname{tr}\mathrm{II}$ is the trace of the second fundamental form $\mathrm{II}$,  and  $\langle A_{\mathcal H}u,v\rangle
=\langle\mathrm{II}(u,v),\mathcal H\rangle$ in which $A_{\mathcal H}$ is the shape operator with respect to $\mathcal H$. Consequently,
\begin{equation}
\mathbb E\,d_M\!\left(x,\pi(\iota(x)+r\omega)\right)^2=\frac{m}{D}r^2+\frac{3|\mathcal H|^2-5\operatorname{R}}{3D(D+2)}r^4+O(r^6).
\label{eq:curved_step_distance}
\end{equation}
\end{theorem}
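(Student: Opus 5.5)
The plan is a Taylor expansion of the projected point in Riemannian normal coordinates at $x$, followed by exact averaging over the sphere with its fourth-moment formula. Identify $\mathbb R^D=T_xM\oplus N_xM$ via $d\iota_x$ and write $r\omega=a+n$, with $a=rP_T\omega$ and $n=rP_N\omega$.

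\emph{Step 1: expansion of the embedding.} I use the standard expansion of $\iota$ along geodesics:
\begin{equation*}
\iota(\exp_x v)=\iota(x)+v+\tfrac12\mathrm{II}(v,v)+\tfrac16\bigl[(\nabla\mathrm{II})(v,v,v)-A_{\mathrm{II}(v,v)}v\bigr]+O(|v|^4).
\end{equation*}
Differentiating it gives the tangent frame at $\exp_x v$ to second order in $v$.

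\emph{Step 2: solve the closest-point condition for $\xi=\xi_r(\omega)$.} The condition is that $\iota(\exp_x\xi)-\iota(x)-r\omega$ is orthogonal to the image of $d(\iota\circ\exp_x)_\xi$. I solve it perturbatively, $\xi=\xi_1+\xi_2+\xi_3+O(r^4)$ with $\xi_k$ homogeneous of degree $k$ in $(a,n)$. I expect:
\begin{itemize}
\item $\xi_1=a$;
\item $\xi_2=A_n a$, where $\langle A_n u,v\rangle=\langle\mathrm{II}(u,v),n\rangle$;
\item $\xi_3$ to collect $A_n^2a$, a term $-\tfrac12\sum_i\langle\mathrm{II}(a,a),\mathrm{II}(a,e_i)\rangle e_i$ or similar, the $\nabla\mathrm{II}$ terms, and the tangential $A_{\mathrm{II}(a,a)}a$ term.
\end{itemize}
Smoothness of the branch $\pi$ justifies the implicit-function expansion.

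\emph{Step 3: average over the sphere.} Expanding $\xi\otimes\xi$ to order $r^4$ leaves $a\otimes a$, the cross terms $\xi_1\otimes\xi_3+\xi_3\otimes\xi_1$, and $\xi_2\otimes\xi_2$. The terms $\xi_1\otimes\xi_2$, and every piece of $\xi_1\otimes\xi_3$ that is odd in $n$, vanish by the antipodal symmetries $\omega_N\mapsto-\omega_N$ and $\omega\mapsto-\omega$. This kills all the $\nabla\mathrm{II}$ contributions. The remaining quartic monomials are averaged with
\begin{equation*}
\mathbb E[\omega_i\omega_j\omega_k\omega_l]=\frac{\delta_{ij}\delta_{kl}+\delta_{ik}\delta_{jl}+\delta_{il}\delta_{jk}}{D(D+2)},
\end{equation*}
and the leading term $\mathbb E[a\otimes a]=\frac{r^2}{D}I$ follows from Theorem~\ref{thm:uniform}. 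The averaged $r^4$ coefficient is built from the two tensors $\langle\mathrm{II}(u,v),\mathcal H\rangle$ and $\sum_i\langle\mathrm{II}(u,e_i),\mathrm{II}(v,e_i)\rangle$. The Gauss equation
\begin{equation*}
\operatorname{Ric}(u,v)=\langle\mathrm{II}(u,v),\mathcal H\rangle-\sum_i\langle\mathrm{II}(u,e_i),\mathrm{II}(v,e_i)\rangle
\end{equation*}
then eliminates the second tensor in favour of $A_{\mathcal H}$ and $\operatorname{Ric}^\sharp$, which should produce the combination $3A_{\mathcal H}-5\operatorname{Ric}^\sharp$ over $3D(D+2)$.

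\emph{Step 4: the distance formula.} Since $d_M(x,\exp_x\xi)=|\xi|$ for small $\xi$, Eq.~\eqref{eq:curved_step_distance} is the trace of Eq.~\eqref{eq:curved_step_covariance}, using $\operatorname{tr}A_{\mathcal H}=|\mathcal H|^2$ and $\operatorname{tr}\operatorname{Ric}=\operatorname{R}$. The remainder is $O(r^6)$ rather than $O(r^5)$ because odd-degree terms average to zero.

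\emph{Main obstacle.} The hard part is computing $\xi_3$ with correct coefficients. The third-order embedding term, the variation of the tangent frame, and the mixed terms quadratic in $n$ all feed into it, and an error there changes the ratio $3{:}5$. I would first do the computation in a graph parametrization $y=(u,\,\tfrac12\mathrm{II}(u,u)+\tfrac16 C(u,u,u))$ and convert $u$ to normal coordinates afterwards via $u=v-\tfrac16 A_{\mathrm{II}(v,v)}v+\dots$. I would then cross-check the final formula on the round sphere $S^2\subset\mathbb R^3$, where everything is explicit.
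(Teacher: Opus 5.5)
Your proposal is correct and follows essentially the same route as the paper's proof. The paper likewise expands $\iota\circ\exp_x$ in normal coordinates and solves the closest-point condition to get $\xi_r=ra+r^2A_na+r^3[A_n^2a-\tfrac13A_{\mathrm{II}(a,a)}a+\mathcal T_x(a,n)]+O(r^4)$ with $\mathcal T_x$ linear in $n$; it then averages with the fourth-moment formula to obtain $\tfrac{r^4}{3D(D+2)}(5Q-2A_{\mathcal H})$ and applies the Gauss equation $\operatorname{Ric}^\sharp=A_{\mathcal H}-Q$. Your anticipated $\xi_3$ terms are the right ones: the $-\tfrac12A_{\mathrm{II}(a,a)}a$ from the graph/normal--normal pairing and the $+\tfrac16A_{\mathrm{II}(a,a)}a$ from the tangential part of the embedding combine to the paper's $-\tfrac13$, which is exactly what fixes the $3{:}5$ ratio.
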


\begin{proof}
Write $\omega=a+n$ with $a=P_{T_xM}\omega$ and $n=P_{N_xM}\omega$.
In normal coordinates at $x$, the Gauss and Weingarten formulas
\cite[Ch.~8, pp.~134--137]{Lee1997Riemannian} give
$\iota(\exp_x\xi)=\iota(x)+\xi+\frac12\mathrm{II}(\xi,\xi)
+\frac16[(\nabla^\perp_\xi\mathrm{II})(\xi,\xi)
-A_{\mathrm{II}(\xi,\xi)}\xi]+O(|\xi|^4)$.
Expanding the closest-point condition
$\iota(x)+r\omega-\iota(\exp_x\xi_r)\perp T_{\exp_x\xi_r}M$ yields
$\xi_r=ra+r^2A_na+r^3[A_n^2a-\frac13A_{\mathrm{II}(a,a)}a
+\mathcal T_x(a,n)]+O(r^4)$, where $\mathcal T_x(a,n)$ is linear in $n$
and hence vanishes after spherical averaging.

Let $\{\nu_\alpha\}$ be an orthonormal normal frame and
$Q=\sum_\alpha A_{\nu_\alpha}^2, c_D^{-1}=D(D+2)$. Using
$\mathbb E[\omega_i\omega_j\omega_k\omega_l]
=c_D(\delta_{ij}\delta_{kl}+\delta_{ik}\delta_{jl}
+\delta_{il}\delta_{jk})$, we obtain
\begin{equation}
\begin{aligned}
\mathbb E[(A_na)\otimes(A_na)]&=c_DQ,\\
\mathbb E[a\otimes A_n^2a+A_n^2a\otimes a]&=2c_DQ,\\
\mathbb E[a\otimes A_{\mathrm{II}(a,a)}a
+A_{\mathrm{II}(a,a)}a\otimes a]
&=2c_D(A_{\mathcal H}+2Q).
\end{aligned}
\end{equation}
Therefore
$\mathbb E[\xi_r\otimes\xi_r]
=\frac{r^2}{D}I_{T_xM}
+\frac{r^4}{3D(D+2)}(5Q-2A_{\mathcal H})+O(r^6)$.
Contracting the Gauss equation gives
$\operatorname{Ric}^{\sharp}=A_{\mathcal H}-Q$, which yields
Eq.~\eqref{eq:curved_step_covariance}. The expectation is even in $r$ by
$\omega\mapsto-\omega$, so the next term is $O(r^6)$. Taking the trace and
using $\operatorname{tr}A_{\mathcal H}=|\mathcal H|^2$ and
$\operatorname{tr}\operatorname{Ric}^{\sharp}=\operatorname{R}$ gives
Eq.~\eqref{eq:curved_step_distance}.
\end{proof}

\begin{table}[t]
\centering
\caption{Uniform ambient sampling on the curved test surface ($m=2$). Theory predicts only the local squared tangent component; walk length is an end-to-end solver quantity.}
\label{tab:uniform_tangent_check}
\scriptsize
\setlength{\tabcolsep}{4.0pt}
\begin{tabular}{c c c c}
\toprule
$D$ & $m/D$ & Measured $\|P_T\omega\|^2$ & Steps / walk \\
\midrule
$3$ & $.667$ & $.664$ & $12.75$ \\
$4$ & $.500$ & $.502$ & $16.31$ \\
$6$ & $.333$ & $.336$ & $27.48$ \\
$8$ & $.250$ & $.250$ & $37.68$ \\
\bottomrule
\end{tabular}
\end{table}

% Table~\ref{tab:uniform_tangent_check} verifies the leading term of
% Theorem~\ref{thm:uniform}: uniform ambient sampling contributes only the
% fraction $m/D$ of its squared motion to the tangent space, explaining the
% growth in walk length with $D$. The $r^4$ term shows what this flat argument
% misses. The correction $-5\operatorname{Ric}^{\sharp}$ is intrinsic and
% therefore unchanged by an isometric re-embedding; directionally, it measures
% how averaged sectional curvature modifies finite-radius tangent progress.
% In contrast, $3A_{\mathcal H}$ is extrinsic and depends on how the same metric
% is embedded, coupling the mean-curvature normal to the second fundamental
% form. Hence two isometric embeddings have the same PDE and the same leading
% $m/D$ penalty, but can have different finite-radius projected-step
% covariances.

% Taking the trace makes this separation explicit:
% the correction is proportional to
% $3|\mathcal H|^2-5\operatorname{R}$, where $\operatorname{R}$ is fixed by the
% intrinsic metric while $\mathcal H$ can change under re-embedding.
% Equivalently, the Gauss equation rewrites it as
% $5|\mathrm{II}|^2-2|\mathcal H|^2$, showing that the first finite-radius
% departure from the flat law is governed directly by local embedding bending.
% Thus re-embedding can change PWoS efficiency without changing the target
% operator: it changes both the admissible geometric radius and the extrinsic
% curvature correction, whereas increasing $D$ simultaneously imposes the
% universal $m/D$ loss. This motivates biasing ambient samples toward the local
% tangent space. 

Table~\ref{tab:uniform_tangent_check} confirms the leading $m/D$ tangent fraction. The curvature correction in Theorem~\ref{thm:curved-step} separates the intrinsic Ricci term from the embedding-dependent contribution $A_{\mathcal H}$. Its trace, $(3|\mathcal H|^2-5\operatorname R)/(3D(D+2))$, shows that isometric re-embeddings can change finite-radius projected progress while preserving the target PDE. This motivates tangent-biased ambient proposals. We therefore sample $\mu\sim\mathrm{Unif}(\mathbb S^{m-1}\cap T)$ and then $\omega\mid\mu\sim\mathrm{vMF}_D(\mu,\kappa)$\cite{MardiaJupp2000DirectionalStatistics,Wood1994vMF}; marginalizing $\mu$ preserves isotropy within $T$ while allowing $\kappa$ to control how much tangent motion is recovered.

% The table makes the design tension visible: increasing $D$ removes a larger fraction of each ambient direction through projection onto the tangent plane, so geometric gains from higher codimension must compensate for slower intrinsic motion.

% Theorem~\ref{thm:uniform} exposes a cost invisible to clearance alone:
% once the admissible radius stops increasing, uniform ambient sampling loses
% intrinsic progress as $D$ grows. We therefore ask how much tangent-biased
% sampling can recover. We first sample
% $\mu\sim\mathrm{Unif}(\mathbb S^{m-1}\cap T)$ and then
% $\omega\mid\mu\sim\mathrm{vMF}_D(\mu,\kappa)$
% \cite{MardiaJupp2000DirectionalStatistics,Wood1994vMF}.
% Marginalizing $\mu$ preserves isotropy within $T$ and removes directional drift.

% \paragraph{Can we recover the discarded motion?}
% A natural graphics interpretation is to guide the ambient sample toward the tangent space, analogous to steering samples toward directions that contribute more to an estimator. Concentrating around one tangent direction would introduce drift, so we instead randomize the vMF center uniformly inside the tangent sphere. The resulting proposal remains isotropic along the surface while controlling how strongly samples avoid the normal bundle. Theorem~\ref{thm:progress} makes this control explicit through $\kappa$.

\begin{figure}
    \centering
    \includegraphics[width=\linewidth]{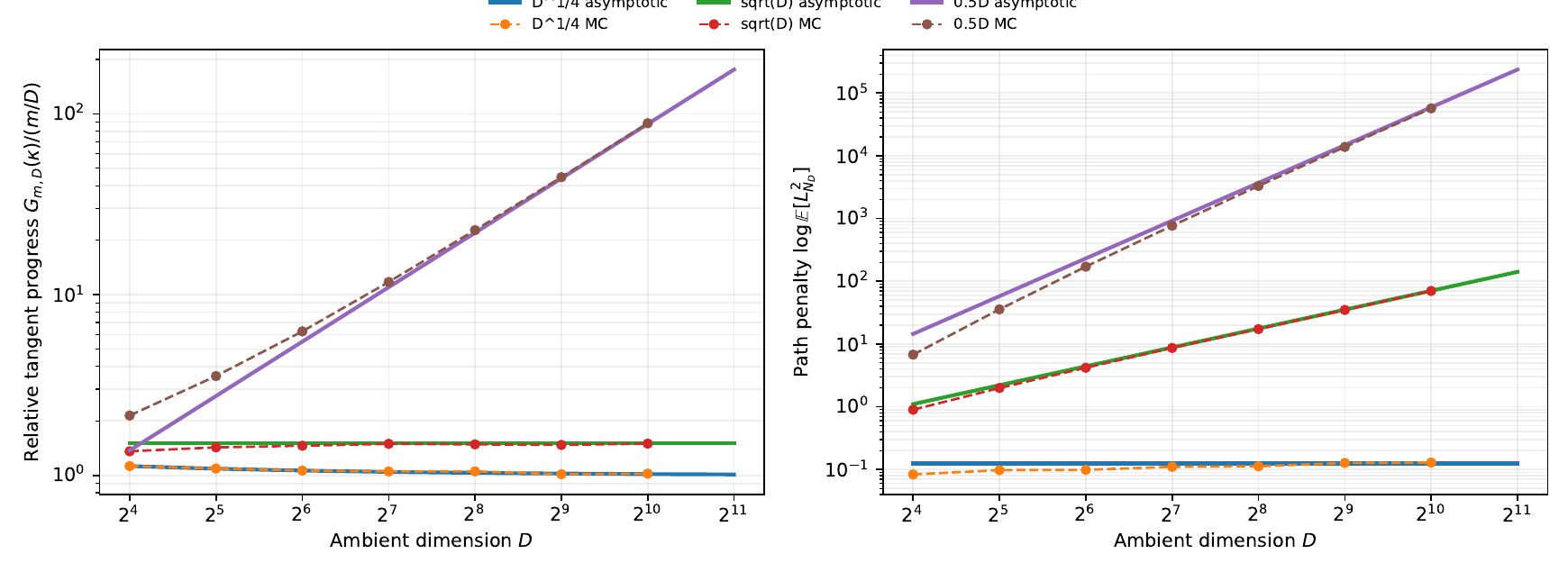}
  \caption{Large-$D$ numerical validation of Theorems~5.3--5.5. The measured tangent progress and accumulated importance-weight second moments follow the predicted asymptotic regimes, illustrating the codimension--variance barrier underlying Corollary~5.6. Markers show direct Monte Carlo estimates from $vMF$ samples and importance-ratio samples, while solid curves denote the corresponding asymptotic predictions.}
    \label{fig:theorem5_largeD_combined}
\end{figure}

\begin{theorem}[Exact tangent progress]
\label{thm:progress}
Let $q_\kappa$ denote the tangent-isotropic vMF mixture above and $A_D(\kappa)={I_{D/2}(\kappa)}/{I_{D/2-1}(\kappa)}.$ Then
\begin{equation}
G_{m,D}(\kappa):=\mathbb E_{q_\kappa}\|P_T\omega\|^2=1-(D-m){A_D(\kappa)}/{\kappa}.
\label{eq:vmf_progress}
\end{equation}
$G_{m,D}(0)=m/D$ by continuity, while for fixed $D$, $\underset{\kappa \to \infty} {\lim}G_{m,D}(\kappa)=1$.
\end{theorem}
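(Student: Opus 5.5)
The plan is to first condition on the vMF center $\mu$ and only then average over $\mu$. Since $\mu\in T$ is a unit vector, decompose $\omega=t\mu+\sqrt{1-t^2}\,v$. Here $t=\langle\omega,\mu\rangle$, and $v$ is uniform on the unit sphere of $\mu^\perp\cong\mathbb R^{D-1}$, independent of $t$. This is the standard tangent-normal decomposition of $\mathrm{vMF}_D(\mu,\kappa)$ used in Wood's sampler. Because $\mu\in T$, we get
$$\|P_T\omega\|^2=t^2+(1-t^2)\|P_{T\cap\mu^\perp}v\|^2 .$$
The subspace $T\cap\mu^\perp$ has dimension $m-1$ inside the $(D-1)$-dimensional space $\mu^\perp$. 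Theorem~\ref{thm:uniform}, applied in $\mathbb R^{D-1}$, gives $\mathbb E\|P_{T\cap\mu^\perp}v\|^2=(m-1)/(D-1)$. Independence then yields
$$\mathbb E[\|P_T\omega\|^2\mid\mu]=\mathbb E[t^2]+(1-\mathbb E[t^2])\tfrac{m-1}{D-1}=1-\tfrac{D-m}{D-1}\bigl(1-\mathbb E[t^2]\bigr).$$
This depends only on $\mathbb E[t^2]$ and not on $\mu$, so averaging over $\mu$ is trivial. (Alternatively, one can use $\mathbb E[\omega\omega^\top\mid\mu]=\alpha\mu\mu^\top+\beta(I-\mu\mu^\top)$ and take a trace against $P_T$; the result is the same.)

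The main computation is $\mathbb E[t^2]$, equivalently $\mathbb E[1-t^2]$, for the vMF marginal density $\propto e^{\kappa t}(1-t^2)^{(D-3)/2}$ on $[-1,1]$. We will use the Bessel identities $\mathbb E[t]=A_D(\kappa)$ and $\frac{d}{d\kappa}\mathbb E[t]=\mathrm{Var}(t)$. The cleaner route is the recurrence $A_D'(\kappa)=1-A_D^2-\frac{D-1}{\kappa}A_D$, which follows from $I_\nu'=I_{\nu+1}+\frac{\nu}{\kappa}I_\nu$ and $I_{\nu-1}-I_{\nu+1}=\frac{2\nu}{\kappa}I_\nu$. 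It gives
$$\mathbb E[t^2]=A_D'+A_D^2=1-\frac{D-1}{\kappa}A_D,$$
that is, $1-\mathbb E[t^2]=(D-1)A_D/\kappa$. As a sanity check without Bessel recurrences, integrate by parts: $\int(1-t^2)^{(D-1)/2}\kappa e^{\kappa t}\,dt=(D-1)\int t(1-t^2)^{(D-3)/2}e^{\kappa t}\,dt$, so $\kappa\,\mathbb E[1-t^2]=(D-1)\mathbb E[t]$. This is the step most prone to index slips, so we will verify it both ways. Substituting into the conditional formula gives $G_{m,D}(\kappa)=1-(D-m)A_D(\kappa)/\kappa$, as claimed.

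The limits follow from Bessel asymptotics. Near $0$, $I_\nu(\kappa)\sim(\kappa/2)^\nu/\Gamma(\nu+1)$, so $A_D(\kappa)/\kappa\to\frac{\Gamma(D/2)}{2\Gamma(D/2+1)}=1/D$. Hence $G\to1-(D-m)/D=m/D$, which matches Theorem~\ref{thm:uniform} since $\mathrm{vMF}_D(\mu,0)$ is uniform. As $\kappa\to\infty$, $I_{\nu+1}/I_\nu\to1$, so $A_D\le1$ and $(D-m)A_D/\kappa\to0$, giving $G\to1$ for fixed $D$.
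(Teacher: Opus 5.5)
Your proof is correct and follows essentially the paper's route. Like the paper, you condition on the tangent center $\mu$ and use the rotationally symmetric second-moment structure of $\mathrm{vMF}_D(\mu,\kappa)$, so that $\mathbb E[\|P_T\omega\|^2\mid\mu]$ does not depend on $\mu$, and then take the small- and large-$\kappa$ Bessel limits. The one difference is that you derive the vMF second moment $1-\mathbb E[t^2]=(D-1)A_D(\kappa)/\kappa$ yourself, by integration by parts or the Bessel recurrence, whereas the paper cites it as the standard identity $b=A_D(\kappa)/\kappa$, $a=1-(D-1)b$.
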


\begin{proof}
Condition on a tangent direction $\mu\in T$.
For $\omega\sim\mathrm{vMF}_D(\mu,\kappa)$, rotational symmetry around
$\mu$ gives $\mathbb E[\omega\omega^\top\mid\mu]=a\,\mu\mu^\top+b\,(I-\mu\mu^\top).$ The standard vMF second moments satisfy $b=A_D(\kappa)/\kappa$ and $a=1-(D-1)b.$ Since $\mu\in T$, one tangent direction is parallel to $\mu$ and the remaining
$m-1$ tangent directions are orthogonal to it. Hence $\mathbb E[\|P_T\omega\|^2\mid\mu]=a+(m-1)b=1-(D-m)\frac{A_D(\kappa)}{\kappa},$ which is independent of $\mu$ and therefore remains unchanged after
marginalization. Using
$A_D(\kappa)=\kappa/D+O(\kappa^3)$ as $\kappa\to0$ gives
$G_{m,D}(0)=m/D$. For fixed $D$,
$A_D(\kappa)\to1$ as $\kappa\to\infty$, so
$A_D(\kappa)/\kappa\to0$ and $G_{m,D}(\kappa)\to1$.
\end{proof}

The quantity $G_{m,D}(\kappa)$ likewise predicts the squared component of the sampled direction in the \emph{local tangent plane}, rather than the final geodesic displacement after projection. Its finite-dimensional prediction is compared with the measured proposal statistics and the complete PDE estimator in Table~\ref{tab:finiteD_tradeoff}. Thus $\kappa$ controls how much of an ambient sample is allocated to locally useful tangent motion, but its solver benefit must still be judged together with the importance weights.

Biasing the ambient direction changes the sampling measure and therefore requires
importance weighting. Let $p_D$ denote the uniform density on $\mathbb S^{D-1}$ and let $q_\kappa$ be the tangent-isotropic vMF mixture above. After marginalizing its random tangent center, $q_\kappa(\omega)$ depends only on $\rho=\|P_T\omega\|$. We write $R_{D,\kappa}(\rho)=q_\kappa(\omega)/p_D(\omega)$ and use the correction $w_\kappa=R_{D,\kappa}^{-1}$. This marginalization avoids introducing weight variation from the auxiliary tangent-center sample itself. More importantly, the correction is multiplied along the entire walk: as in importance-sampled rendering, a proposal that produces better individual steps can still be worse if its likelihood ratios fluctuate. Theorem~\ref{thm:weights} quantifies this path-level cost.
\begin{theorem}[One-step and path-weight variance]
\label{thm:weights}
For the marginalized importance weight $w_\kappa$ such that $\mathbb E_{q_\kappa}[w_\kappa]=1,$ and
\begin{equation}
V_w(D,\kappa):=\mathbb E_{q_\kappa}[w_\kappa^2]-1=\mathbb E_{S}\left[R_{D,\kappa}(\sqrt S)^{-1}\right]-1 .
\label{eq:vmf_weight_variance}
\end{equation}
For iid flat steps, the angular likelihood ratio $L_N=\prod_{k=1}^{N}w_\kappa(\omega_k)$ satisfies $\mathbb E_{q_\kappa}[L_N^2]=\left(1+V_w(D,\kappa)\right)^N .$ For fixed $m$ and $\kappa=o(\sqrt D)$,
\begin{equation}
\begin{split}
G_{m,D}(\kappa)
&=
\frac mD+\frac{\kappa^2}{D^2}(1+o(1)), V_w(D,\kappa)=\frac{\kappa^4}{2mD^2}(1+o(1)).
\end{split}
\label{eq:subsqrt_progress}
\end{equation}
Here $L_N$ contains only the angular likelihood ratio and excludes the
screening factor appearing in $W_N$ of
Eq.~\eqref{eq:hui_wop_source_update}.
\end{theorem}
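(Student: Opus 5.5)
The proof has three parts: the exact moment identities, the path factorization, and the small-$\kappa$ asymptotics; only the last needs real work.

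\textbf{Exact identities.} Write $w_\kappa=p_D/q_\kappa$ and change measure. First,
\[
\mathbb E_{q_\kappa}[w_\kappa]=\int p_D=1 .
\]
Second,
\[
\mathbb E_{q_\kappa}[w_\kappa^2]=\int p_D^2/q_\kappa=\mathbb E_{p_D}[R_{D,\kappa}^{-1}] .
\]
Since $q_\kappa$ depends on $\omega$ only through $\rho=\|P_T\omega\|$, this is an expectation over $S=\|P_T\omega\|^2$. By Theorem~\ref{thm:uniform}, $S\sim\mathrm{Beta}(m/2,(D-m)/2)$ under $p_D$. This gives Eq.~\eqref{eq:vmf_weight_variance}.

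\textbf{Path moment.} The directions $\omega_k$ are iid under $q_\kappa$ for flat steps, so $L_N^2$ is a product of independent factors. Hence
\[
\mathbb E[L_N^2]=\prod_k \mathbb E[w_\kappa^2]=(1+V_w)^N .
\]

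\textbf{Asymptotics.} I would write the marginalized density ratio explicitly:
\[
R_{D,\kappa}(\omega)=C_D(\kappa)^{-1}\,\mathbb E_\mu\bigl[e^{\kappa\langle\mu,\omega\rangle}\bigr],
\qquad \mu\sim\mathrm{Unif}(\mathbb S^{m-1}\cap T).
\]
Here $C_D(\kappa)=\mathbb E_{p_D}[e^{\kappa\langle\mu,\omega\rangle}]$ is the uniform-sphere normalizer, which equals $\Gamma(D/2)(\kappa/2)^{1-D/2}I_{D/2-1}(\kappa)$. The tangent average is a function of $\kappa\rho$ alone, namely the $m$-dimensional Bessel-type function $\Gamma(m/2)(\kappa\rho/2)^{1-m/2}I_{m/2-1}(\kappa\rho)$.

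\emph{Progress.} Expand $A_D(\kappa)/\kappa$ using the continued-fraction/series
\[
A_D(\kappa)=\frac{\kappa}{D}-\frac{\kappa^3}{D^2(D+2)}+\cdots .
\]
Substituting into Theorem~\ref{thm:progress} gives
\[
G=1-(D-m)\left(\frac1D-\frac{\kappa^2}{D^2(D+2)}+\cdots\right)=\frac mD+\frac{\kappa^2(D-m)}{D^2(D+2)}+\cdots .
\]
This is $\frac mD+\frac{\kappa^2}{D^2}(1+o(1))$. To justify truncation when $\kappa=o(\sqrt D)$, I would use the uniform bounds on Bessel ratios, e.g.\ Amos-type inequalities $A_D(\kappa)=\kappa/(\nu+\sqrt{\nu^2+\kappa^2})$ up to a shift in $\nu=D/2$. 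These show that the relative error is $O(\kappa^2/D)$.

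\emph{Weight variance.} Under $p_D$, $\rho^2=S$ is concentrated near $m/D$, so $\kappa\rho=O(\kappa/\sqrt D)\to0$. The log-ratio therefore admits a Taylor expansion in $\kappa$:
\[
\log R=\frac{\kappa^2}{2}\left(\frac{\rho^2}{m}-\frac1D\right)+O\!\left(\kappa^4\Bigl(\frac{\rho^4}{m^2}+\frac1{D^2}\Bigr)\right).
\]
Both second-order coefficients come from the $\mathbb E\langle\mu,\omega\rangle^2$ terms: $\rho^2/m$ for the tangent average and $1/D$ for the normalizer. Centering $Y=\rho^2/m-1/D$ under $p_D$, the Beta variance gives
\[
\mathrm{Var}(S)=\frac{2m(D-m)}{D^2(D+2)}\approx\frac{2m}{D^2},
\qquad\text{so}\qquad
\mathrm{Var}(Y)\approx\frac{2}{mD^2}.
\]
Then
\[
\mathbb E[R^{-1}]-1\approx\mathbb E\!\left[e^{-\kappa^2Y/2}\right]-1\approx\frac{\kappa^4}{8}\mathrm{Var}(Y)+\dots=\frac{\kappa^4}{4mD^2},
\]
where the linear term vanishes because the normalization $\mathbb E_{p}[R]=1$ makes first-order terms cancel.

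\textbf{Main obstacle.} This heuristic gives $\kappa^4/(4mD^2)$, a factor of $2$ below the stated $\kappa^4/(2mD^2)$. So the exact bookkeeping of the fourth-order terms is the step I expect to be hardest. The quartic Taylor terms of the tangent average and of $C_D$ also contribute at order $\kappa^4/D^2$ and must be tracked exactly, not discarded; combined with the square of the quadratic term, they should produce the coefficient $1/(2m)$. I would do this carefully, computing $\mathbb E[R^{-1}]=\mathbb E[1-\log R+\tfrac12(\log R)^2]+\dots$ with Beta moments of $S$ up to second order. I would then bound the remainder uniformly via $e^{|x|}$ tail control on $\kappa^2S$, using the sub-exponential concentration of $DS/m$ and the fact that $\kappa^2/D\to0$.
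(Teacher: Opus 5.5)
Your exact identities, the path factorization $\mathbb E[L_N^2]=(1+V_w)^N$, and the expansion of $G_{m,D}$ via $A_D(\kappa)=\kappa/D-\kappa^3/(D^2(D+2))+\cdots$ all match the paper's argument. The gap is in the step that carries the theorem's quantitative content, the constant in $V_w$. The only computation you carry out gives $\kappa^4/(4mD^2)$; agreement with $\kappa^4/(2mD^2)$ is asserted, not derived.

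\textbf{Where the factor of two is lost.} You truncate $\log R$ at quadratic order and then treat its mean as zero. The discarded quartic part, however, has a nonzero mean of exactly the size of the answer. Since $\log\mathcal M_m(z)=z^2/(2m)-z^4/(4m^2(m+2))+O(z^6)$ and $\mathbb E[S^2]=m(m+2)/(D(D+2))$, you get $\mathbb E[\log R]=-\kappa^4(D-m)/(4mD^2(D+2))+\cdots\approx-\kappa^4/(4mD^2)$. The quartic term of $\log C_D$ is $O(\kappa^4/D^3)$ and is negligible, contrary to your remark.

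The normalization $\mathbb E_{p_D}[R]=1$ that you invoke actually forces this nonzero mean. Expanding $\mathbb E[e^{\log R}]=1$ gives $\mathbb E[\log R]=-\tfrac12\mathbb E[(\log R)^2]+O(\kappa^6/D^3)$, not $0$. Hence
\[
\mathbb E[R^{-1}]\approx 1-\mathbb E[\log R]+\tfrac12\mathbb E[(\log R)^2]\approx 1+\mathbb E[(\log R)^2]\approx 1+\tfrac{\kappa^4}{4}\mathrm{Var}(Y)=1+\frac{(D-m)\kappa^4}{2mD^2(D+2)} .
\]
In other words, $\mathbb E[R]=1$ gives $\mathbb E[R^{-1}]-1\approx\mathrm{Var}(\log R)$, not $\tfrac12\mathrm{Var}(\log R)$. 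With this correction your route recovers the paper's exact coefficient $(D-m)/(2mD^2(D+2))$.

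\textbf{The paper's route and rigor.} The paper avoids logarithms altogether. It expands
\[
\mathcal M_m(z)^{-1}=1-\frac{z^2}{2m}+\frac{(m+4)z^4}{8m^2(m+2)}+O(z^6)
\]
and takes Beta moments of $S$. It then multiplies by $C_D(0)/C_D(\kappa)=\mathcal M_D(\kappa)=1+\kappa^2/(2D)+\kappa^4/(8D(D+2))+\cdots$; the $\kappa^2$ terms cancel and the quartic coefficient can be read off directly.

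This is also the simplest way to make the remainder rigorous, replacing your proposed exponential tail control of $\kappa^2S$:
\begin{itemize}
\item $0<\mathcal M_m(z)^{-1}\le 1$ by Jensen, so the quartic Taylor remainder is bounded by $C_m z^6$ for all real $z$.
\item $\mathbb E[S^3]=O(D^{-3})$, so the expectation of that remainder is $O(\kappa^6/D^3)$.
\item The tail of $\mathcal M_D(\kappa)$ beyond $\kappa^4$ is dominated by $\sum_{k\ge3}(\kappa^2/(2D))^k/k!$.
\end{itemize}
Together these give an $O(\kappa^6/D^3)$ error, which is $o(\kappa^4/D^2)$ exactly when $\kappa=o(\sqrt D)$.
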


\begin{proof}
For the marginalized vMF mixture, the likelihood ratio has the closed form
$R_{D,\kappa}(\rho)=\frac{C_D(\kappa)}{C_D(0)}
\mathcal M_m(\kappa\rho)$, where
$\mathcal M_m(z)=\Gamma(m/2)\left(\frac{2}{z}\right)^{m/2-1}I_{m/2-1}(z)$.
Since $w_\kappa=p_D/q_\kappa$,
$\mathbb E_{q_\kappa}[w_\kappa]=1$ and
$\mathbb E_{q_\kappa}[w_\kappa^2]
=\mathbb E_{p_D}[R_{D,\kappa}(\|P_T\omega\|)^{-1}]$.
Theorem~\ref{thm:uniform} gives
$S=\|P_T\omega\|^2\sim\mathrm{Beta}(m/2,(D-m)/2)$, proving
Eq.~\eqref{eq:vmf_weight_variance}. Independence similarly gives
$\mathbb E[L_N^2]=(1+V_w)^N$.

For $\kappa=o(\sqrt D)$, the vMF mean resultant length satisfies
$A_D(\kappa)=\frac{\kappa}{D}
-\frac{\kappa^3}{D^2(D+2)}
+O\!\left(\frac{\kappa^5}{D^5}\right)$, and substitution into
Eq.~\eqref{eq:vmf_progress} yields
$G_{m,D}(\kappa)=\frac mD+\frac{\kappa^2}{D^2}(1+o(1))$.

For the weight variance, use
$\mathbb E[S^j]=(\frac m2)_j/(\frac D2)_j=O(D^{-j})$ for fixed $m$.
The power series of $\mathcal M_m$ gives
$\mathcal M_m(z)=1+\frac{z^2}{2m}
+\frac{z^4}{8m(m+2)}+O(z^6)$, hence
\begin{equation}
\mathbb E_S\!\left[\mathcal M_m(\kappa\sqrt S)^{-1}\right]
=
1-\frac{\kappa^2}{2D}
+
\frac{m+4}{8mD(D+2)}\kappa^4
+
o\!\left(\frac{\kappa^4}{D^2}\right).
\end{equation}
Likewise,
\begin{equation}
\frac{C_D(0)}{C_D(\kappa)}
=
1+\frac{\kappa^2}{2D}
+
\frac{\kappa^4}{8D(D+2)}
+
o\!\left(\frac{\kappa^4}{D^2}\right).
\end{equation}
Multiplying these expansions, the quadratic terms cancel and
\begin{equation}
V_w(D,\kappa)
=
\frac{D-m}{2mD^2(D+2)}\kappa^4
+
o\!\left(\frac{\kappa^4}{D^2}\right)
=
\frac{\kappa^4}{2mD^2}(1+o(1)).
\end{equation}
\end{proof}
\begin{table}[t]
\centering
\caption{Finite-$D$ check at $D=8$. ``th./emp.'' reports the theoretical and measured local squared tangent component from the existing full PWoS sweep.}
\label{tab:finiteD_tradeoff}
\scriptsize
\setlength{\tabcolsep}{3.0pt}
\begin{tabular}{c c c c c}
\toprule
$\kappa$ & $G$ th./emp. & $\mathbb E[w^2]$ th./emp. & ESS & RMSE \\
\midrule
$0$   & $.250/.250$ & $1.000/1.000$ & $1.000$ & $.0094$ \\
$0.5$ & $.252/.253$ & $1.000/1.000$ & $.995$ & $.0075$ \\
$1$   & $.259/.260$ & $1.002/1.002$ & $.918$ & $.0164$ \\
$2$   & $.285/.281$ & $1.029/1.035$ & $.472$ & $.1464$ \\
$4$   & $.363/.366$ & $1.307/1.286$ & $.121$ & $.3120$ \\
\bottomrule
\end{tabular}
\end{table}

Table~\ref{tab:finiteD_tradeoff} illustrates this tradeoff at $D=8$: stronger tangent guidance increases local progress and shortens walks, but reduces path ESS and eventually degrades RMSE.

\subsection{Dimension Scaling and Selection}
The finite-$D$ results suggest using only mild tangent guidance, but they leave a more fundamental design question: could $\kappa$ grow with $D$ so that tangent bias cancels the $m/D$ progress loss and makes arbitrarily high-dimensional embeddings cheap? Theorem~\ref{thm:barrier} tests exactly this possibility. Its three regimes correspond to increasingly aggressive attempts to recover tangent motion as codimension grows.

\begin{theorem}[Codimension--variance barrier]
\label{thm:barrier}
Fix $m,r$ in the iid flat model and let $D\to\infty$.

\textbf{(i) Subcritical:} if $\kappa=o(\sqrt D)$ and $N_D\sim D/m$,
\begin{align}
\log\mathbb E_q[L_{N_D}^2]
&=
\frac{\kappa^4}{2m^2D}(1+o(1)),
\label{eq:path_subsqrt}\\
\frac{G_{m,D}(\kappa)-m/D}{m/D}
&=
O(D^{-1/2})
\qquad
\text{when }\kappa=O(D^{1/4}),
\label{eq:stable_bias_no_gain}
\end{align}
so within the subcritical regime, uniformly bounded path second moments allow only asymptotically vanishing relative tangent-progress gain.

\textbf{(ii) Constant-factor gain:} if $\kappa=\beta\sqrt D$, $\beta>0$,
\begin{align}
G_{m,D}(\kappa)
&=
\frac{m+\beta^2}{D}+o(D^{-1}),
\label{eq:sqrtD_progress}\\
\lim_{D\to\infty}V_w(D,\kappa)
&=
V_m(\beta)
:=
e^{\beta^2/2}
\mathbb E_{X\sim\chi_m^2}
\left[
\mathcal M_m(\beta\sqrt X)^{-1}
\right]-1
>0.
\label{eq:sqrtD_variance}
\end{align}
Hence $N_D=\Theta(D)$ gives $\mathbb E_q[L_{N_D}^2]=\exp(\Theta(D))$: a constant-factor improvement in tangent progress is accompanied by exponentially growing angular path second moments.

\textbf{(iii) Order-one gain:} if $\kappa=\alpha D$, $\alpha>0$, let
$a(\alpha)=(\sqrt{1+4\alpha^2}-1)/(2\alpha)$ and
$\psi(\alpha)=\alpha a(\alpha)+\frac12\log(1-a(\alpha)^2)$. Then
\begin{align}
\lim_{D\to\infty}G_{m,D}(\alpha D)
&=
a(\alpha)^2,
\label{eq:linearD_progress}\\
\lim_{D\to\infty}
\frac1D\log\!\left(1+V_w(D,\alpha D)\right)
&=
\psi(\alpha)>0.
\label{eq:linearD_variance}
\end{align}
Thus achieving order-one tangent progress in this linear-concentration regime incurs an exponential one-step importance-weight penalty. These statements characterize the three specified scalings and do not claim to exhaust all intermediate choices of $\kappa(D)$.
\end{theorem}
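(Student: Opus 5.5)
We will work entirely from the closed forms already established: the exact progress formula $G_{m,D}(\kappa)=1-(D-m)A_D(\kappa)/\kappa$ of Theorem~\ref{thm:progress}, the identity $1+V_w=\mathbb E_S[R_{D,\kappa}(\sqrt S)^{-1}]$ with $R_{D,\kappa}(\rho)=\frac{C_D(\kappa)}{C_D(0)}\mathcal M_m(\kappa\rho)$ and $S\sim\mathrm{Beta}(m/2,(D-m)/2)$ from Theorems~\ref{thm:uniform} and~\ref{thm:weights}, and the path identity $\mathbb E_q[L_N^2]=(1+V_w)^N$. Each regime then reduces to an asymptotic statement about $A_D$, $C_D(0)/C_D(\kappa)$, and the Beta law of $S$.

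\textbf{(i)} This follows from Theorem~\ref{thm:weights}. Since $V_w=\kappa^4/(2mD^2)(1+o(1))\to0$, we have $\log(1+V_w)=V_w(1+o(1))$. Multiplying by $N_D\sim D/m$ gives Eq.~\eqref{eq:path_subsqrt}. For Eq.~\eqref{eq:stable_bias_no_gain}, the relative gain is $(\kappa^2/D^2)/(m/D)\,(1+o(1))=\kappa^2/(mD)$, which is $O(D^{-1/2})$ when $\kappa=O(D^{1/4})$. Note that $\kappa=O(D^{1/4})$ is exactly the condition for the log-moment to stay bounded.

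\textbf{(ii)} With $\kappa=\beta\sqrt D$ we use $DS\Rightarrow\chi_m^2$, which follows from the representation $S=U/(U+V)$ with $V/D\to1$. We also use the uniform Bessel-ratio asymptotics $A_D(\kappa)=\kappa/(\nu+1+\sqrt{(\nu+1)^2+\kappa^2})\cdot(1+o(1))$ with $\nu=D/2-1$ (Amos-type bounds). This gives $A_D(\kappa)/\kappa=\frac{1}{D}\bigl(1-\beta^2/D+o(1/D)\bigr)$, and inserting it into Eq.~\eqref{eq:vmf_progress} yields $(m+\beta^2)/D$.

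For the variance, $C_D(0)/C_D(\kappa)=\Gamma(D/2)(2/\kappa)^{\nu}I_\nu(\kappa)$ is the spherical average of $e^{\kappa\omega_1}$, that is, $\mathbb E[e^{\beta\sqrt D\,\omega_1}]$. Since $\sqrt D\omega_1\Rightarrow\mathcal N(0,1)$ with uniformly bounded exponential moments, this tends to $e^{\beta^2/2}$. The other factor $\mathcal M_m(\beta\sqrt{DS})^{-1}$ is bounded by $1$ because $\mathcal M_m\ge1$, so bounded convergence gives the limit $\mathbb E[\mathcal M_m(\beta\sqrt X)^{-1}]$.

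Positivity of $V_m(\beta)$ follows from Jensen's inequality. Since $\mathbb E_q[w]=1$, we have $\mathbb E_q[w^2]\ge1$, with equality only if $q=p$; for finite $D$ this yields strict inequality. To keep the inequality strict in the limit, we use $\mathbb E[\mathcal M_m(\beta\sqrt X)]=e^{\beta^2/2}$, which one checks from the series and the $\chi^2$ moments. Jensen applied to $1/\mathcal M_m$ then gives $\mathbb E[\mathcal M_m^{-1}]>e^{-\beta^2/2}$ strictly, since $\mathcal M_m(\beta\sqrt X)$ is nonconstant. Finally, $(1+V_w)^{N_D}=\exp(\Theta(D))$ because $\log(1+V_w)\to\log(1+V_m(\beta))>0$.

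\textbf{(iii)} With $\kappa=\alpha D$ we use the same uniform Debye asymptotics with $\nu\sim D/2$, which give $A_D(\alpha D)\to 2\alpha/(1+\sqrt{1+4\alpha^2})=a(\alpha)$. Then $G=1-(D-m)A_D/(\alpha D)\to1-a/\alpha$, and $1-a/\alpha=a^2$ because $a$ solves $\alpha a^2+a-\alpha=0$.

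For the variance we apply Laplace's method to
\[
1+V_w=\frac{C_D(0)}{C_D(\kappa)}\,\mathbb E_S\bigl[\mathcal M_m(\kappa\sqrt S)^{-1}\bigr].
\]
The Debye expansion gives $\frac1D\log\frac{C_D(0)}{C_D(\alpha D)}\to\frac12\sqrt{1+4\alpha^2}-\frac12\log\frac{1+\sqrt{1+4\alpha^2}}{2\alpha}\,\cdot$ (after simplification), which we will rewrite in terms of $a$. For the expectation, $\mathcal M_m(z)^{-1}\approx e^{-z}$ up to polynomial factors. The density of $S$ is $\exp\bigl(\frac D2\log(1-s)+O(\log D)\bigr)$, and $\kappa\sqrt s=\alpha D\rho$, so
\[
\frac1D\log\mathbb E[\cdot]\to\sup_{\rho\in[0,1]}\Bigl[-\alpha\rho+\tfrac12\log(1-\rho^2)\Bigr]=0,
\]
attained at $\rho=0$. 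The exponential rate therefore comes entirely from the normalizing ratio. Simplifying that rate using $a^2=1-a/\alpha$ should give $\psi(\alpha)=\alpha a+\frac12\log(1-a^2)$, and one checks $\psi(\alpha)>0$ from $\psi(0)=0$ and $\psi'(\alpha)=a>0$ by the envelope theorem, because $\psi$ is the Legendre-type value $\sup_a[\alpha a+\frac12\log(1-a^2)]$.

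The main obstacle is this last step. It requires making the Debye asymptotics for $I_\nu(\alpha D)$ with $\nu\approx D/2$ precise enough to identify the rate exactly as $\psi$. The cleanest route is to write $C_D(0)/C_D(\kappa)=\mathbb E[e^{\kappa\omega_1}]$, with $\omega_1$ having density $\propto(1-t^2)^{(D-3)/2}$, and apply Laplace's method directly: the rate is $\sup_t[\alpha t+\frac12\log(1-t^2)]$, whose maximizer solves $\alpha(1-t^2)=t$, so $t=a(\alpha)$. This yields $\psi$ immediately and avoids Bessel asymptotics. The same one-dimensional Laplace argument also recovers $A_D\to a$ as the concentration point, giving (iii) in a unified way.
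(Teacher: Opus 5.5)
Your route is the paper's. Part (i) is read off from Theorem~\ref{thm:weights}, part (ii) combines $DS\Rightarrow\chi_m^2$ with the limit of the normalizer ratio, and part (iii) is Laplace's method on the vMF normalizer. Several of your steps are more complete than the paper, which only asserts them:
\begin{itemize}
\item writing $C_D(0)/C_D(\kappa)=\mathbb E_{p_D}[e^{\kappa\omega_1}]$ and using the CLT for $\sqrt D\omega_1$;
\item bounded convergence via $\mathcal M_m\ge1$;
\item strict positivity of $V_m(\beta)$ from $\mathbb E[\mathcal M_m(\beta\sqrt X)]=e^{\beta^2/2}$ plus strict Jensen;
\item the zero exponential rate of $\mathbb E_S[\mathcal M_m(\alpha D\sqrt S)^{-1}]$;
\item $\psi>0$ from the variational formula.
\end{itemize}
One step, however, does not go through as written.

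The gap is the progress formula \eqref{eq:sqrtD_progress}. Since $G_{m,D}=1-(D-m)A_D(\kappa)/\kappa$ and $(D-m)A_D(\kappa)/\kappa=1-O(1/D)$, the whole statement is carried by the $D^{-2}$ coefficient of $A_D(\kappa)/\kappa$. You therefore need that ratio to relative accuracy $o(1/D)$. An input of the form $A_D(\kappa)=\kappa\bigl(\nu+1+\sqrt{(\nu+1)^2+\kappa^2}\bigr)^{-1}(1+o(1))$ cannot produce the factor $1-\beta^2/D$, which is itself $1+o(1)$. The Amos sandwich does not settle this order either. The expression you wrote is a lower bound that happens to be sharp here. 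The companion upper bound $\kappa/(\nu+\frac12+\sqrt{(\nu+\frac12)^2+\kappa^2})$, however, only gives $G_{m,D}\ge(m+\beta^2-1)/D+o(1/D)$.

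A fix within your own probabilistic toolkit: the decomposition $a+(D-1)b=1$ from the proof of Theorem~\ref{thm:progress} gives
\[
G_{m,D}(\kappa)=\frac{m-1}{D-1}+\frac{D-m}{D-1}\,\mathbb E_{\mathrm{vMF}}\bigl[(\mu^\top\omega)^2\bigr].
\]
Moreover, $D\,\mathbb E_{\mathrm{vMF}}[(\mu^\top\omega)^2]=\mathbb E_{p_D}[D\omega_1^2e^{\beta\sqrt D\omega_1}]/\mathbb E_{p_D}[e^{\beta\sqrt D\omega_1}]\to\mathbb E[Z^2e^{\beta Z}]/\mathbb E[e^{\beta Z}]=1+\beta^2$. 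This follows from the same CLT and uniform exponential-moment bound you already use. It needs only first-order convergence and yields $(m+\beta^2)/D+o(1/D)$. Alternatively, expand $A_D(\kappa)/\kappa=\frac1D\,{}_0F_1(;\frac D2+1;\frac{\kappa^2}{4})/{}_0F_1(;\frac D2;\frac{\kappa^2}{4})$ termwise.

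There is also a smaller slip in (iii). Your first Debye rate for $\frac1D\log\frac{C_D(0)}{C_D(\alpha D)}$ is the rate of $I_\nu(\alpha D)$ alone. It omits the prefactor $\Gamma(D/2)(2/\kappa)^\nu$, whose rate is $-\frac12(1+\log\alpha)$. Including it gives $\frac12(\sqrt{1+4\alpha^2}-1)-\frac12\log\frac{1+\sqrt{1+4\alpha^2}}{2}=\psi(\alpha)$. Since you then switch to the direct one-dimensional Laplace computation, which is correct, just drop that line.
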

\begin{proof}
For (i), Theorem~\ref{thm:weights} gives
$V_w(D,\kappa)=\frac{\kappa^4}{2mD^2}(1+o(1))$. Using
$\log(1+x)=x+o(x)$ and $N_D\sim D/m$,
\[
\log\mathbb E[L_{N_D}^2]
=
N_D\log(1+V_w)
=
\frac{\kappa^4}{2m^2D}(1+o(1)),
\]
which gives Eq.~\eqref{eq:path_subsqrt}. Boundedness requires
$\kappa^4/D=O(1)$, hence $\kappa=O(D^{1/4})$. Equation~\eqref{eq:subsqrt_progress} then gives
\[
\frac{G_{m,D}(\kappa)-m/D}{m/D}
=
\frac{\kappa^2}{mD}(1+o(1))
=
O(D^{-1/2}),
\]
and therefore the relative gain tends to zero.

For (ii), set $\kappa=\beta\sqrt D$. The large-order Bessel-ratio expansion
$\frac{A_D(\beta\sqrt D)}{\beta\sqrt D}
=\frac1D-\frac{\beta^2}{D^2}+o(D^{-2})$
inserted into Eq.~\eqref{eq:vmf_progress} gives
Eq.~\eqref{eq:sqrtD_progress}. By Theorem~\ref{thm:uniform},
$DS\overset{d}{\to}X$ with $X\sim\chi_m^2$. At the same scaling,
$\frac{C_D(\beta\sqrt D)}{C_D(0)}$ converges to $e^{-\beta^2/2}$, while continuity of $\mathcal M_m$ gives
$\mathcal M_m(\beta\sqrt{DS})
\overset{d}{\to}
\mathcal M_m(\beta\sqrt X)$.
Dominated convergence in Eq.~\eqref{eq:vmf_weight_variance} yields
Eq.~\eqref{eq:sqrtD_variance}. Since this limit is strictly positive,
$N_D=\Theta(D)$ implies
$\mathbb E[L_{N_D}^2]=(1+V_w)^{N_D}=\exp(\Theta(D))$.

For (iii), uniform large-order asymptotics of the modified Bessel functions give
$\lim_{D\to\infty}A_D(\alpha D)=a(\alpha)$, where
$\alpha(1-a^2)=a$. Hence
\[
\lim_{D\to\infty}G_{m,D}(\alpha D)
=
1-\frac{a(\alpha)}{\alpha}
=
a(\alpha)^2,
\]
giving Eq.~\eqref{eq:linearD_progress}. Laplace's method applied to the spherical integral defining the vMF normalizer gives
\[\lim_{D\to\infty}-\frac1D\log\frac{C_D(\alpha D)}{C_D(0)}=\psi(\alpha)=\alpha a(\alpha)+\frac12\log(1-a(\alpha)^2).\]
The remaining factor $\mathcal M_m(\alpha D\sqrt S)$ has fixed tangent dimension $m$ and does not cancel this exponential normalization rate. Substitution into Eq.~\eqref{eq:vmf_weight_variance} gives Eq.~\eqref{eq:linearD_variance}; since $a(\alpha)\in(0,1)$ for $\alpha>0$, $\psi(\alpha)>0$.
\end{proof}

\begin{table}[t]
\centering
\caption{Geometry--transport tradeoff after re-embedding. Residual projection conflicts improve with dimension, while uniform-sampling walks become longer.}
\label{tab:dimension_geometry_tradeoff}
\scriptsize
\setlength{\tabcolsep}{5.0pt}
\begin{tabular}{c c c}
\toprule
$D$ & Residual face conflicts & Steps / walk \\
\midrule
$3$ & $4531$ & $14.15$ \\
$4$ & $546$  & $19.31$ \\
$6$ & $340$  & $29.88$ \\
$8$ & $1$    & $38.77$ \\
\bottomrule
\end{tabular}
\end{table}

For a direction-independent walk radius $r_D(x)$ chosen before sampling
$\omega$, define the expected intrinsic squared progress per step as
\begin{equation}
\mathcal J_{D,\kappa}
=
\mathbb E[r_D(x)^2]\,G_{m,D}(\kappa).
\label{eq:dimension_efficiency}
\end{equation}
For uniform sampling, $G_{m,D}(0)=m/D$, so $\mathcal J_{D,0}$ reduces to the
expected squared radius multiplied by the tangent fraction.

\begin{corollary}[Finite-dimension design implication]
\label{cor:dimension}
Assume the admissible walk radii satisfy
\begin{equation}
\sup_D \mathbb E[r_D(x)^2] \le C_r < \infty .
\end{equation}
Then
\begin{equation}
\mathcal J_{D,0}
=
\frac mD\mathbb E[r_D(x)^2]
\le
\frac{mC_r}{D},
\end{equation}
so uniform sampling requires $\Omega(D)$ local steps for fixed intrinsic
mean-square progress. Within the three regimes of
Theorem~\ref{thm:barrier}, avoiding exponential angular second-moment growth
does not improve this asymptotic step order.
\end{corollary}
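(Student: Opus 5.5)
The bound on $\mathcal J_{D,0}$ is immediate. Theorem~\ref{thm:progress} gives $G_{m,D}(0)=m/D$, the Beta law of Theorem~\ref{thm:uniform} confirms it, and the radius is chosen before $\omega$ is drawn. Conditioning on $x$ and using independence of $r_D(x)$ and $\omega$ then factorizes the expectation as in Eq.~\eqref{eq:dimension_efficiency}, giving $\mathcal J_{D,0}=\frac mD\mathbb E[r_D(x)^2]\le mC_r/D$.

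For the step count, fix a target mean-square intrinsic progress $P>0$. In the flat iid model, the cross terms vanish by the antipodal-symmetry argument of Theorem~\ref{thm:uniform}: each increment $r_D P_T\omega_k$ has zero conditional mean given the past state. Hence
\[
\mathbb E\|X_N-X_0\|^2=\sum_{k<N}\mathbb E[r_D(x_k)^2]\,G_{m,D}(0)\le N\,\frac{mC_r}{D}.
\]
Requiring this to be at least $P$ forces $N\ge PD/(mC_r)=\Omega(D)$.

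For the last claim, I compare step orders across the regimes of Theorem~\ref{thm:barrier}, using $N\gtrsim P/(C_r G_{m,D}(\kappa))$ by the same martingale argument with $G_{m,D}(\kappa)$ in place of $m/D$.
\begin{itemize}
\item \textbf{Subcritical.} Bounded (non-exponential) path second moments require $\kappa=O(D^{1/4})$, and Eq.~\eqref{eq:stable_bias_no_gain} then gives $G_{m,D}(\kappa)=\frac mD(1+O(D^{-1/2}))$. The required $N$ stays $\Theta(D)$, and with $N\sim D/m$ Eq.~\eqref{eq:path_subsqrt} keeps $\log\mathbb E[L_N^2]$ bounded. More generally, for $\kappa=o(\sqrt D)$ one has $G=O(1/D)$, so $N=\Omega(D)$ still holds.
\item \textbf{Constant-factor regime.} Here $G=(m+\beta^2)/D$, so $N=\Theta(D)$ is still needed. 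Eq.~\eqref{eq:sqrtD_variance} then yields $\mathbb E[L_N^2]=(1+V_m(\beta)+o(1))^{\Theta(D)}=\exp(\Theta(D))$.
\item \textbf{Linear regime.} This is the only one that improves the step order, since $G\to a(\alpha)^2$ gives $N=O(1)$. But Eq.~\eqref{eq:linearD_variance} shows that already one step has $1+V_w=\exp(D\psi(\alpha)(1+o(1)))$ with $\psi>0$, and since $N\ge1$ the path second moment grows exponentially.
\end{itemize}
So every regime avoiding exponential second-moment growth keeps $N=\Omega(D)$.

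The main obstacle is that the corollary mixes a mean-square progress requirement with the walk-dependent radii $r_D(x_k)$. The factorization needs $r_D$ measurable with respect to the state before $\omega_k$ is drawn. Summing $\mathbb E[r_D(x_k)^2]$ also needs the hypothesis $\sup_D\mathbb E[r_D^2]\le C_r$ to hold at each step's state distribution. I would interpret the hypothesis as a per-step bound along the walk to make this rigorous.
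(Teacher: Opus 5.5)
Your proposal is correct and follows essentially the same route as the paper: the uniform bound comes from $G_{m,D}(0)=m/D$, and then the three regimes of Theorem~\ref{thm:barrier} are checked one by one. Your martingale step-count argument and per-step radius caveat make explicit what the paper leaves implicit, and one wording fix is needed: ``bounded'' and ``non-exponential'' are not synonyms in the subcritical case, since for $D^{1/4}\ll\kappa=o(\sqrt D)$ the path moments are unbounded but still $\exp(o(D))$, though your $\kappa=o(\sqrt D)$ remark already covers that range, so the conclusion is unaffected.
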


\begin{proof}
The uniform case follows immediately from
$G_{m,D}(0)=m/D$ and the bounded-radius assumption. In
Theorem~\ref{thm:barrier}(i), bounded angular path second moments require
$\kappa=O(D^{1/4})$, for which
$G_{m,D}(\kappa)=\frac mD(1+o(1))$; hence $\mathcal J_{D,\kappa}=\mathcal O(D^{-1})$, so the
$\Omega(D)$ lower bound on the number of local steps remains.
In regime (ii),
$G_{m,D}(\kappa)=(m+\beta^2)D^{-1}+o(D^{-1})$, so only the constant changes,
while $\mathbb E_q[L_{N_D}^2]=\exp(\Theta(D))$. In regime (iii),
$G_{m,D}(\alpha D)$ has an order-one limit, but
$1+V_w(D,\alpha D)=\exp(D\psi(\alpha)+o(D))$. Thus improving the asymptotic
progress order within these regimes requires exponential importance-weight
growth.
\end{proof}

Theorem~\ref{thm:barrier} and Corollary~\ref{cor:dimension} therefore suggest
using $\mathcal J_{D,\kappa}$ together with $V_w(D,\kappa)$: the former predicts
transport progress, while the latter rejects proposals whose shorter walks are
obtained at excessive importance-weight variance. In practice,
$\mathbb E[r_D(x)^2]$ can be estimated from a small set of representative
states or a short pilot, while $G_{m,D}(\kappa)$ and $V_w(D,\kappa)$ are
analytic. Hence only a small number of competitive configurations require
solver-level evaluation rather than an exhaustive PDE sweep.

We verify the progress model on the complete $(D,\kappa)$ sweep. If
$\mathcal J_{D,\kappa}$ measures intrinsic squared progress per step, then
$\mathrm{Steps}\propto1/\mathcal J_{D,\kappa}$, equivalently
$\mathrm{Steps}\cdot\mathcal J_{D,\kappa}$ should remain approximately
constant. Across all 20 configurations, $1/\mathcal J_{D,\kappa}$ correlates with the
measured walk length at $0.992$, while
$\mathrm{Steps}\cdot\mathcal J_{D,\kappa}$ varies by only $4.95\%$. The agreement becomes slightly stronger after excluding strongly ill-conditioned importance weights, supporting $\mathcal J_{D,\kappa}$ as a
predictor of transport cost while leaving estimator stability to
$V_w(D,\kappa)$.

\begin{table}[t]
\centering
\caption{Validation of $\mathcal J_{D,\kappa}$ as a transport-cost predictor.
We compare its inverse with measured steps per walk; CV measures the relative
variation of $\mathrm{Steps}\cdot\mathcal J_{D,\kappa}$, and ESS is the
normalized effective sample size of the accumulated path importance weights.}
\label{tab:progress_proxy_validation}
\small
\begin{tabular}{lcc}
\toprule
Configurations &
$\operatorname{corr}(1/\mathcal J,\mathrm{Steps})$ &
CV$(\mathrm{Steps}\cdot\mathcal J)$ \\
\midrule
All 20 & 0.992 & 4.95\% \\
ESS$>0.9$ & 0.994 & 4.29\% \\
\bottomrule
\end{tabular}
\end{table}

\section{Experiments}
\label{sec:experiments}
We evaluate WoAP from three perspectives: whether an improved geometric representation accelerates the PDE solver without sacrificing accuracy, how re-embedding and projection states control branch selection, and how ambient dimension and angular sampling jointly determine computational cost.

\subsection{End-to-End Accuracy and Walk Efficiency}
We first compare WoAP with standard PWoS on a developable folded strip, where an exact flat isometry isolates the effect of the representation from embedding-optimization errors. The intrinsic PBM reference is estimated by $10^5$ planar 2D WoS paths per query on the unfolded domain, with RMSE $5.02\times10^{-4}$ against the analytic harmonic solution. An independent $32768$-path/query PWoS estimate reaches RMSE $1.06\times10^{-3}$ against the same solution. Standard PWoS uses the spatially varying local-feature-size radius $r=\min\{\delta(x),\ell(x)\}$. We also test a large-radius baseline that removes this restriction while retaining global closest-point projection, and WoAP using the exact isometric representation.

% \begin{table}[t]
% \centering
% \caption{End-to-end comparison on the developable folded strip. RMSE is mean$\pm$std over 10 independent runs with 256 paths/query against the $10^5$-path/query intrinsic 2D WoS (PBM) reference. Wrong projections violate the local-branch displacement bound.}
% \label{tab:pwos_e2e}
% \small
% \setlength{\tabcolsep}{3pt}
% \begin{tabular}{lccc}
% \toprule
% Method & RMSE & Steps/path & \shortstack{Wrong proj.\\/path}\\
% \midrule
% PWoS & $0.00734\pm0.00223$ & $295.68\pm4.87$ & 0\\
% Large-radius & $0.14734\pm0.00432$ & $15.28\pm0.22$ & $4.99\pm0.07$\\
% \textsc{WoAP} & $0.00734\pm0.00125$ & $15.96\pm0.27$ & 0\\
% \bottomrule
% \end{tabular}
% \end{table}

\begin{table}[t]
\centering
\caption{Folded-strip end-to-end comparison against the $10^5$-path/query intrinsic PBM reference. RMSE and solver time are mean$\pm$std over 10 runs with 256 paths/query. Time excludes preprocessing and JIT compilation; remote jumps exceed $2.05r$ in intrinsic displacement.}
\label{tab:pwos_e2e}
\small
\setlength{\tabcolsep}{2pt}
\begin{tabular}{@{}lcccc@{}}
\toprule
Method & \shortstack{RMSE\\($10^{-3}$)} & Steps & \shortstack{Remote\\jumps} & \shortstack{Time\\(ms/query)}\\
\midrule
PWoS & $7.34\pm2.23$ & $295.68$ & $0$ & $20.92\pm0.37$\\
Large-radius & $147.34\pm4.32$ & $15.28$ & $4.99$ & $1.23\pm0.02$\\
\textsc{WoAP} & $7.34\pm1.25$ & $15.96$ & $0$ & $0.48\pm0.01$\\
\bottomrule
\end{tabular}
\end{table}

Table~\ref{tab:pwos_e2e} shows that WoAP matches the error level of PWoS while reducing the average number of stochastic steps by $18.5\times$. The large-radius baseline achieves similarly short walks but incurs substantial error, indicating that increasing the step radius alone is insufficient without reliable branch selection. We identify wrong-sheet events using a curvature-based displacement test: with measured $K_{\max}=2.635$, a connected local projection is expected to satisfy $d_M(x,y)\le 2K_{\max}^{-1}\arcsin(K_{\max}r)$. At the largest tested radius $r=0.18$, the bound is $2.084r$. The large-radius baseline exceeds it approximately five times per path, whereas no violations are observed for PWoS or WoAP. This comparison establishes the potential solver benefit of an exact isometry; the optimized re-embedding and projection-state components are evaluated next.

\subsection{Representation Quality and Branch Tracking}
We study re-embedding on a five-turn warped Swiss roll with 360 vertices, 624 triangles, inter-layer gap $0.065$, and target radius $R^*=0.14$. The embedding sweep uses the PBD-style edge projection~\cite{Muller2007PBD} and measures nonlocal conflicts as face pairs with $R_{\rm sym}<R^*$, excluding two-ring neighbors after a $2R^*$ centroid broad phase. To evaluate the runtime projection policy, we compare each endpoint with a 12-substep same-sample continuation reference and record the fraction whose discrepancy exceeds $10^{-3}$. Projection cost includes the interior Green query and continuation substeps.

\begin{table}[t]
\centering
\caption{Embedding and branch-tracking statistics on the warped Swiss roll. Mismatch denotes endpoint disagreement with 12-substep continuation.}
\label{tab:main_geometry}
\small
\begin{tabular}{lrrrr}
\toprule
Embedding & Conflicts & Transition & Mismatch & Proj./walk\\
\midrule
Input 3D & 5030 & 21.8\% & 23.67\% & 76.65\\
Opt. 3D & 4531 & 28.2\% & 11.00\% & 45.21\\
Opt. 4D & 546 & 10.3\% & 0.67\% & 39.29\\
Opt. 6D & 340 & 15.9\% & 1.67\% & 59.33\\
Opt. 8D & 1 & 5.1\% & 0.00\% & 74.50\\
\bottomrule
\end{tabular}
\end{table}

Re-embedding strongly suppresses nonlocal competition (Table~\ref{tab:main_geometry}): conflicts decrease from $5030$ in the input geometry to $546$ at $D=4$ and only one at $D=8$. Endpoint mismatch falls from $23.67\%$ to $0.00\%$, while the transition set occupies just $5.1\%$ of faces at $D=8$. However, projection cost is lowest at $D=4$, demonstrating that greater geometric clearance does not always produce faster walks. Figure~\ref{fig:main_embedding} visualizes the embedding changes and corresponding transition maps. The separate 48-vertex hard-constraint pilot reduces conflicts from $309$ at $D=3$ to $105$--$106$ at $D=6$--$8$, with maximum edge error below $3.34\times10^{-10}$; the large-mesh results above use the approximate PBD-style optimizer.

\begin{figure}[t]
\centering
\includegraphics[width=0.9\linewidth]{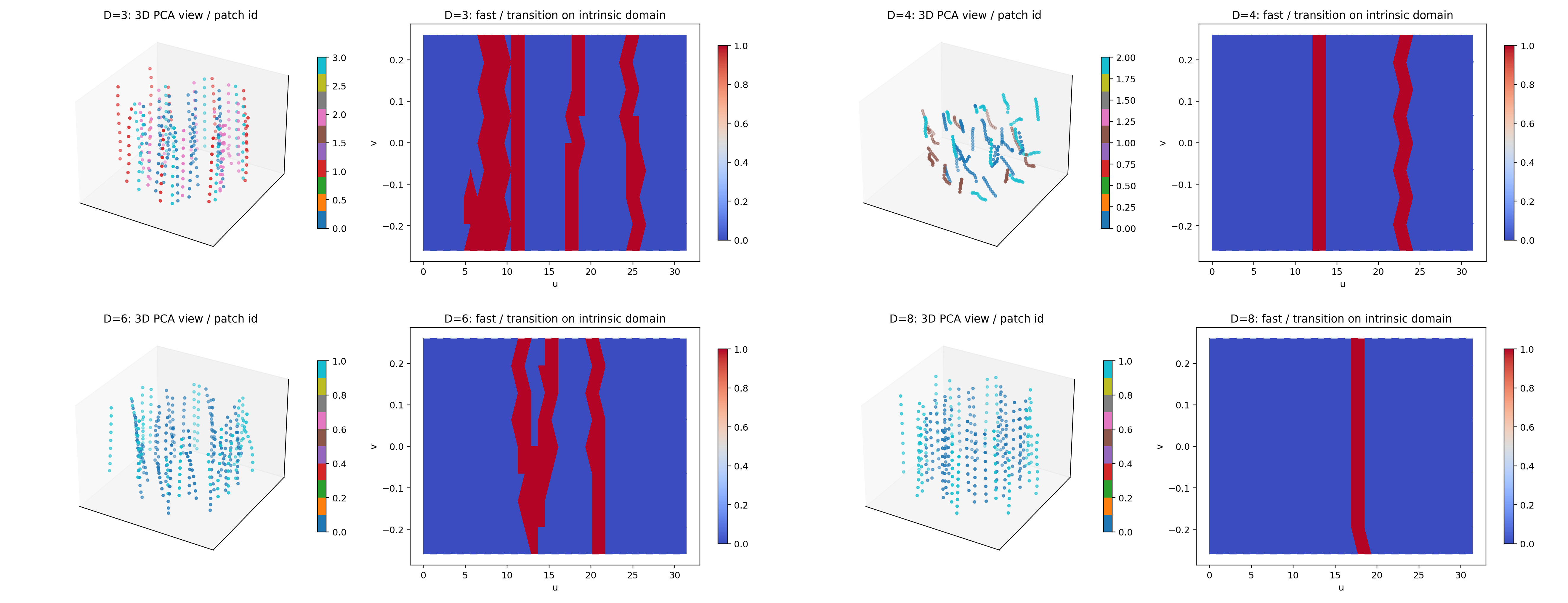}
\caption{Higher-dimensional embeddings and their intrinsic-domain transition maps. Increased separation reduces competing projection branches but changes continuation workload.}
\label{fig:main_embedding}
\Description{High-dimensional embeddings and transition maps.}
\end{figure}

To examine the projection-state mechanism independently, we use a finer Swiss roll with 1,408 vertices and 2,610 triangles. At $K=5$, increasing $\lambda_{\rm conf}$ from $0$ to $5$
with the regularization weights fixed reduces weighted same-owner
conflict from $0.1168$ to $0.0366$, while interface length
increases from $3.69$ to $6.46$. With $K=6$, the measured same-owner conflict vanishes, all patches remain connected, and $15.21\%$ of faces enter the transition set. Selective continuation then uses about $2.65$ local projections per step, compared with $12$ for the fine continuation reference, with approximately $0.2\%$ endpoint disagreement. In a separate 256-walk/query harmonic test, patch tracking takes $14.2$ stochastic steps and $21.2$ projections per walk, compared with $186.6$ steps for the conservative global-medial baseline. A global large-radius baseline incurs about $3.96$ remote jumps per walk, whereas none are observed with patch tracking. Together, these measurements show how separating competing sheets and continuing only near transition regions reduce branch-handling cost.

\subsection{Dimension--Sampling Tradeoff}
Reducing projection conflicts favors higher ambient dimension, but projected random steps lose tangent progress as $D$ increases. We quantify this tradeoff on the warped Swiss roll by solving $-\Delta_Mu+2u=f$ with nonzero source and Dirichlet data. We test $D\in\{3,4,6,8\}$ and tangent-mixture concentrations $\kappa\in\{0,0.5,1,2,4\}$, using 64 walks at each of ten fixed queries per configuration. RMSE is measured against a $40\times9$ surface-FEM solution. An independent $20\times5$ surface-FEM predictor (RMSE $1.06\times10^{-3}$ against this reference) supplies the normal-Taylor compensation $\mathfrak m^*$; the reference field does not enter the estimator. We report stochastic steps, total closest-point queries per walk, and the effective sample-size (ESS) fraction of accumulated angular importance weights.

\begin{table}[t]
\centering
\caption{Selected screened-Poisson results on the warped Swiss roll (64 walks/query). RMSE uses the fine surface-FEM reference; ESS measures accumulated angular importance weights.}
\label{tab:main_sampling}
\small
\begin{tabular}{rrrrrr}
\toprule
$D$ & $\kappa$ & Steps & Proj./walk & ESS & RMSE\\
\midrule
3 & 0 & 12.75 & 57.73 & 1.000 & 0.0406\\
3 & 0.5 & 12.51 & 53.10 & 0.996 & 0.0389\\
4 & 0 & 16.31 & 55.85 & 1.000 & 0.0558\\
6 & 0 & 27.48 & 76.68 & 1.000 & 0.0142\\
8 & 0 & 37.68 & 76.72 & 1.000 & 0.0094\\
8 & 0.5 & 34.51 & 69.87 & 0.995 & 0.0075\\
8 & 4 & 27.62 & 55.50 & 0.121 & 0.3120\\
\bottomrule
\end{tabular}
\end{table}

The observed tangent fraction agrees with the predicted angular statistics: at $(D,\kappa)=(6,1)$, theory gives $0.3468$ and simulation gives $0.3485$. Stronger tangent bias shortens walks but also increases the variability of importance weights. At $D=8$, increasing $\kappa$ from $0$ to $4$ reduces the mean number of steps from $37.68$ to $27.62$, yet ESS falls from $1$ to $0.121$ and RMSE rises from $0.0094$ to $0.3120$ (Table~\ref{tab:main_sampling}). The full $D\times\kappa$ sweep and a separate test of the compensation term appear in Appendix~\ref{app:experiments}.

\begin{table}[t]
\centering
\caption{Fastest tested configurations at each target RMSE, estimated by bootstrap resampling of the 64-walk/query pools. Times exclude preprocessing.}
\label{tab:main_fixed}
\small
\begin{tabular}{rrrrrr}
\toprule
Target & $D$ & $\kappa$ & Walks/query & Time [s] & RMSE\\
\midrule
0.050 & 3 & 0.5 & 8 & 1.217 & 0.0493\\
0.030 & 6 & 0 & 8 & 1.759 & 0.0270\\
0.020 & 8 & 0 & 16 & 3.888 & 0.0197\\
\bottomrule
\end{tabular}
\end{table}

We finally compare configurations at fixed accuracy rather than at fixed path count. Bootstrap resampling~\cite{Efron1979Bootstrap} of the original 64-walk pools estimates RMSE and sampling time at 8, 16, 32, and 64 walks/query. Table~\ref{tab:main_fixed} selects the fastest tested configuration meeting each target: $D=3$ for RMSE $0.05$, $D=6$ for $0.03$, and $D=8$ for $0.02$. The preferred dimension therefore depends on the requested accuracy, balancing geometric clearance, tangent progress, and sampling variance rather than optimizing any one of them in isolation.

\section{Discussion and Limitations}
\label{sec:limitations}
WoAP makes extrinsic geometry an explicit design variable for an intrinsic stochastic PDE problem. Its three components address different costs: re-embedding reduces nonlocal conflicts, projection states resolve remaining branches, and the sampling analysis discourages selecting dimension from clearance alone. The current evidence is strongest for this mechanism and the measured geometry--sampling tradeoff.

Several limits remain. The main geometry and PDE sweep uses approximate edge projection; only the small pilot validates the hard-constrained optimizer. Triangle edge preservation does not make mesh closest-point maps globally smooth, and finite-direction preprocessing cannot certify every runtime branch. In particular, unchanged angular samples do not by themselves prove that a continued multi-branch extension satisfies the ambient Green identity on a large ball. Interior seed-local projection, truncated compensation, boundary stopping, and capped walks add numerical error. The theory concerns a flat fixed-radius model and angular likelihood moments, not a universal PDE-variance bound. Finally, the quantitative compensated sweep covers one folded geometry and ten queries, the accordion example is qualitative, and the fixed-RMSE estimates exclude preprocessing. Larger hard-constrained benchmarks, local-extension certification, and independent accuracy/cost trials are needed to establish broader end-to-end performance.
\label{sec:main_end}

\bibliographystyle{ACM-Reference-Format}
\bibliography{references}

\appendix
\section{Implementation Details}
\label{app:implementation}
\subsection{Embedding and Local Geometry}
The main geometry and solver experiments use the scalable PBD-style edge projection of Eq.~\eqref{eq:pbd_edge_projection}. After each repulsive update, we apply Jacobi edge-projection sweeps to restore the target intrinsic edge  lengths. Although this does not enforce the coupled edge constraints exactly, the resulting metric distortion is small in the tested embeddings: the maximum
edge-length error is $2.30\times10^{-4}$ at $D=3$,
$1.77\times10^{-4}$ at $D=4$, $5.45\times10^{-6}$ at $D=6$, and
$3.58\times10^{-7}$ at $D=8$. The corresponding RMS errors are at most $1.59\times10^{-5}$.

The conflict candidate set is rebuilt periodically from the current embedding using the centroid broad phase, while tangent-point quantities are recomputed from the current vertex positions at every optimization step. After embedding optimization, the residual conflict graph used for phase construction is recomputed from the final geometry.

\subsection{Predictor-Based Compensation}
\label{app:backend}
The Green recursion, its dimension dependence, and the use of predictor-based compensation are attributed to Hui et al.~\cite{Hui2026MPWoS}. We retain their normal Taylor construction. In codimension one, write $z=y+\eta n(y)$ and define the residual of a coarse numerical predictor $u^*$:
\begin{equation}
R^*(s;y)
=
-\Delta E u^*(y+s n(y))
+\sigma u^*(y)-f(y).
\label{eq:hui_predictor_residual}
\end{equation}
Using $\mathfrak m(y)=0$, the normal expansion is truncated as
\begin{equation}
\mathfrak m^*(x)
=
\eta \mathfrak m_{1,*}(y)
+
\frac{\eta^2}{2}\mathfrak m_{2,*}(y),
\label{eq:hui_mstar}
\end{equation}
with central differences
\begin{equation}
\begin{split}
\mathfrak m_{1,*}(y)&=\frac{R^*(h_\eta;y)-R^*(-h_\eta;y)}{2h_\eta},\\
\mathfrak m_{2,*}(y)&=\frac{R^*(h_\eta;y)-2R^*(0;y)+R^*(-h_\eta;y)}{h_\eta^2}.
\end{split}
\label{eq:hui_mstar_coeff}
\end{equation}
For codimension $q>1$, Hui et al. replace the scalar normal coordinate by local
normal coordinates and use the corresponding multivariate Taylor expansion.
Writing $\eta\in\mathbb R^q$ in a local normal frame, our mesh implementation
stores
\begin{equation}
\mathfrak m^*(y+\mathbf N_y\eta)
=
\mathbf g_y^\top\eta
+
\frac12\eta^\top\mathbf H_y\eta,
\label{eq:hui_mstar_multinormal}
\end{equation}
where the full symmetric normal Hessian $\mathbf H_y$ includes the mixed
normal derivatives.  Directly differentiating a piecewise-linear triangle
closest-point map is unstable, so we reconstruct a quadratic surface jet and
a quadratic coarse-predictor jet from the two-ring neighborhood, differentiate
the resulting smooth local closest-point extension, and barycentrically
interpolate $(\mathbf g_y,\mathbf H_y)$ at runtime. %  This is our numerical realization of Hui et al.'s multivariate normal Taylor construction, not a new compensation model.

The Green density in Eq.~\eqref{eq:hui_local_recursion} is sampled by a tabulated radial distribution and an independent uniform ambient direction. Both the radius-dependent survival factor and interior normalization use the actual ambient dimension. The fine FEM reference is never used to construct $\mathfrak m^*$ in the recursive experiment. The isolated compensation diagnostic uses it only for the boundary expectation, explicitly isolating the source-correction error.
\label{eq:hui_green_density}\label{eq:ambient_pde}

\subsection{Preprocessing and Runtime Sequence}
\label{app:runtime}
Preprocessing optimizes the embedding, constructs the conflict graph and phases, assigns face owners, and tests a band near the seams by deterministic seeded continuation. Seam faces are always marked transition; detected risk is dilated by one face ring. The complement is used for the fast endpoint rule. The map is a finite-direction test and is not a certificate. Spherical-Fibonacci directions are used for the three-dimensional diagnostic~\cite{Keinert2015SphericalFibonacci}; higher-dimensional tests use the directional sets provided by the corresponding implementation.

The runtime sequence is:
\begin{enumerate}
\item Initialize $S=0$, $W=1$, the start point, its seed face $\tau$, and owner $a_\tau$.
\item If the interpolated physical boundary-distance estimate exceeds the stopping tolerance, choose $\delta =0.1$ in $r=\min\{(1-\delta)\widehat d_{\partial M},R^*\}$ and draw an interior point from the $D$-dimensional Green density.
\item Evaluate the interior source and $\mathfrak m^*$ with a two-ring seed-local projection; add their contribution with the incoming $W$.
\item Draw one uniform or tangent-mixture vMF boundary direction. Evaluate the endpoint with the owner projector on fast faces or deterministic seeded continuation on transition faces.
\item Multiply $W$ by one survival factor and one marginalized angular importance factor. Update the point, seed face, and landing-face owner; repeat.
\item Return $S+Wb$ at the stopping state. The full-sweep code caps walks at 260 steps and applies its terminal payoff at a capped state; this is an additional numerical approximation, not an exact stopping identity.
\end{enumerate}

The compensated sweep uses eight continuation substeps, while endpoint diagnostics use a finer 12-substep numerical reference. Patch transitions preserve the ongoing stochastic walk, with owner updates handled by local continuation. Interior Green samples use seed-local projection, while the fast/transition map governs boundary endpoint projection.

% The compensated sweep uses eight continuation substeps; the endpoint diagnostic uses a 12-substep reference. The latter is a numerical oracle, not an exact branch solution. No interface restarts the walk. Interior Green queries use a seed-local evaluation in the supplied implementation; their branch validity is not established by the boundary fast/transition classifier.

\paragraph{Importance diagnostics.}
For path $s$, let $L_s=\prod_kw_\kappa(\omega_k)$. The recorded ESS fraction is $(\sum_sL_s)^2/(N\sum_sL_s^2)$, averaged over query points. It excludes the screening product $\prod_kc_{\sigma,r_k}$, matching the supplied code. % This corrects the previous draft's description of ESS as using the complete screened path weight.

\section{Supplementary Experimental Details}
\label{app:experiments}
The main text reports the geometry sweep, hard-isometry pilot, phase-domain ablation, end-to-end PWoS comparison, selected sampling configurations, and fixed-RMSE results. This appendix documents the evaluation conventions, compensation validation, complete geometry--sampling sweep, curvature-step validation, and additional qualitative geometry tests. 
\subsection{Evaluation conventions}
For the compensated screened-Poisson sweep, \emph{Steps/walk} counts stochastic
PWoS steps, whereas \emph{Proj./walk} counts all local closest-point solves,
including the interior Green query and any deterministic continuation
substeps. The reported ESS is computed from the product of angular importance
weights only; screening factors are excluded. RMSE is measured at the ten
fixed query points against the fine surface-FEM reference used in the main
text. These conventions match the released evaluation scripts.

\subsection{Isolated coarse-predictor compensation}
\label{sec:exp_compensation}
The runtime uses the normal Taylor compensation $\mathfrak m^*$ described in
Appendix~\ref{app:backend}. To verify this numerical component independently
of full-path effects, we use a cylindrical strip and a screened-Poisson
problem with a $96\times25$ surface-FEM reference and an independent
$24\times8$ predictor. The predictor has RMSE $2.25\times10^{-3}$ against the
fine field. At 24 held-out points, we evaluate a single Green step with
$10^5$ common-random-number samples and compare $\mathfrak m^*=0$ with the
coarse-predictor compensation.

\begin{table}[t]
\centering
\caption{Isolated Green-step RMSE with and without coarse-predictor
compensation.}
\label{tab:compensation}
\small
\begin{tabular}{rccc}
\toprule
$D$ & $\mathfrak m^*=0$ & Hui $\mathfrak m^*$ & Reduction \\
\midrule
3 & $5.01\times10^{-4}$ & $1.89\times10^{-4}$ & $2.64\times$ \\
4 & $3.02\times10^{-4}$ & $1.44\times10^{-4}$ & $2.10\times$ \\
6 & $1.87\times10^{-4}$ & $1.34\times10^{-4}$ & $1.39\times$ \\
8 & $1.67\times10^{-4}$ & $1.15\times10^{-4}$ & $1.45\times$ \\
\bottomrule
\end{tabular}
\end{table}

The compensation consistently reduces the isolated local Green-step error.
In the complete stochastic recursion its effect is smaller and can be masked
by projection, stopping, and Monte Carlo errors; accordingly, the main text
does not claim a full-path variance reduction from this component.

\subsection{Complete geometry--sampling sweep}
The main text shows representative rows from the compensated
$D\times\kappa$ experiment. Table~\ref{tab:vmf_tradeoff} gives the complete
20-cell sweep. Each cell uses the same geometry, owner map, ten query points,
and 64 walks/query for its ambient dimension; only the angular proposal varies
with $\kappa$.

\begin{table*}[!t]
\centering
\caption{Complete compensated $D\times\kappa$ sweep. $G$ is the expected
squared tangent component, and ESS uses accumulated angular likelihood ratios
only.}
\label{tab:vmf_tradeoff}
\scriptsize
\begin{tabular}{rrrrrrrrr}
\toprule
$D$ & $\kappa$ & $G_{\rm th}$ & $G_{\rm emp}$ & ESS & Steps & Proj./walk & RMSE $\mathfrak m^*=0$ & RMSE Hui \\
\midrule
3&0&.667&.664&1.000&12.75&57.73&.0408&.0406\\
3&.5&.672&.673&.996&12.51&53.10&.0393&.0389\\
3&1&.687&.688&.940&12.70&57.74&.0476&.0469\\
3&2&.731&.733&.614&11.57&50.82&.0752&.0752\\
3&4&.812&.813&.205&10.46&43.74&.1315&.1314\\
4&0&.500&.502&1.000&16.31&55.85&.0557&.0558\\
4&.5&.505&.506&.995&16.75&59.18&.0468&.0468\\
4&1&.520&.526&.945&15.74&54.03&.0534&.0534\\
4&2&.567&.572&.566&14.05&50.05&.0840&.0839\\
4&4&.671&.666&.152&12.60&43.91&.6013&.6014\\
6&0&.333&.336&1.000&27.48&76.68&.0142&.0142\\
6&.5&.337&.335&.994&26.98&75.60&.0128&.0131\\
6&1&.347&.348&.912&27.09&75.13&.0203&.0201\\
6&2&.382&.382&.414&22.81&65.61&.0975&.0973\\
6&4&.480&.478&.120&18.86&53.91&.2487&.2487\\
8&0&.250&.250&1.000&37.68&76.72&.0095&.0094\\
8&.5&.252&.253&.995&34.51&69.87&.0075&.0075\\
8&1&.259&.260&.918&35.78&73.20&.0158&.0164\\
8&2&.285&.281&.472&34.26&69.78&.1457&.1464\\
8&4&.363&.366&.121&27.62&55.50&.3120&.3120\\
\bottomrule
\end{tabular}
\end{table*}

The complete sweep supports the same conclusion as the selected main-text
rows: increasing $\kappa$ improves local tangent progress but eventually
collapses path ESS, while increasing $D$ trades projection separation against
the $m/D$ tangent penalty. The relevant design criterion is therefore cost at
a requested accuracy rather than walk length alone.

\subsection{Numerical validation of the curvature-corrected step law}
The curvature-corrected projected-step law predicts that the leading
$m/D$ tangent-progress term is modified at order $r^4$ by the intrinsic and
extrinsic curvature of the embedding. We validate this correction independently
of the Swiss-roll solver using geometries with analytic closest-point maps:
a unit sphere in $\mathbb R^3$, the same sphere zero-padded into
$\mathbb R^6$, and a unit cylinder in $\mathbb R^3$.

For each geometry, we use $r=0.1$ and four million antithetic direction pairs.
An exact flat-term control variate removes the dominant $r^2$ contribution,
allowing the $r^4$ coefficient to be estimated directly. Table~\ref{tab:curvature_step_validation}
compares the resulting scalar coefficient and covariance correction with the
theoretical prediction.

\begin{table}[t]
\centering
\caption{Numerical validation of the $r^4$ curvature correction in the
projected-step law. The scalar coefficient and full covariance correction agree
with the theoretical prediction at the percent level or better.}
\label{tab:curvature_step_validation}
\small
\begin{tabular}{lrrrr}
\toprule
Surface & $D$ & Theory $c_4$ & Empirical $c_4$ & Tensor rel. err. \\
\midrule
Sphere   & 3 & .044444 & .044857 & .0113 \\
Sphere   & 6 & .013889 & .013827 & .0052 \\
Cylinder & 3 & .066667 & .066943 & .0042 \\
\bottomrule
\end{tabular}
\end{table}

The agreement supports two parts of the analysis separately from the full PDE
experiment. First, the flat $m/D$ law remains the leading contribution after
projection. Second, finite-radius progress depends on the particular
isometric embedding through the curvature correction, so two embeddings of
the same intrinsic surface can have different projected-walk efficiency even
before global projection conflicts are considered.

% \paragraph{Reproducibility.}The experiment is implemented in\texttt{\detokenize{code/curvature_projected_step_validation.py}} and thereported values are stored in \texttt{\detokenize{experiment_data/curvature_projected_step_validation.csv}}.

\subsection{Qualitative validation on an accordion ribbon}
\label{sec:accordion}

The primary experiments use a Swiss-roll geometry, whose competing branches
are generated mainly by radial winding. To check that the projection-state
construction is not tied to this particular geometry, we additionally apply
the same embedding, phase-field ownership, and fast/transition preprocessing
to an accordion ribbon with repeated translational folds.

The owner field separates nearby folds into coherent projection states, while
transition regions concentrate around locations where continuation can move
between owners. This reproduces the intended mechanism on a folding pattern
that is geometrically distinct from the Swiss roll. 

% We use this example only as a qualitative representation check; no quantitative accuracy, runtime, or codimension claim in the main text depends on it.

\begin{figure}[t]
    \centering
    \includegraphics[width=0.9\linewidth]{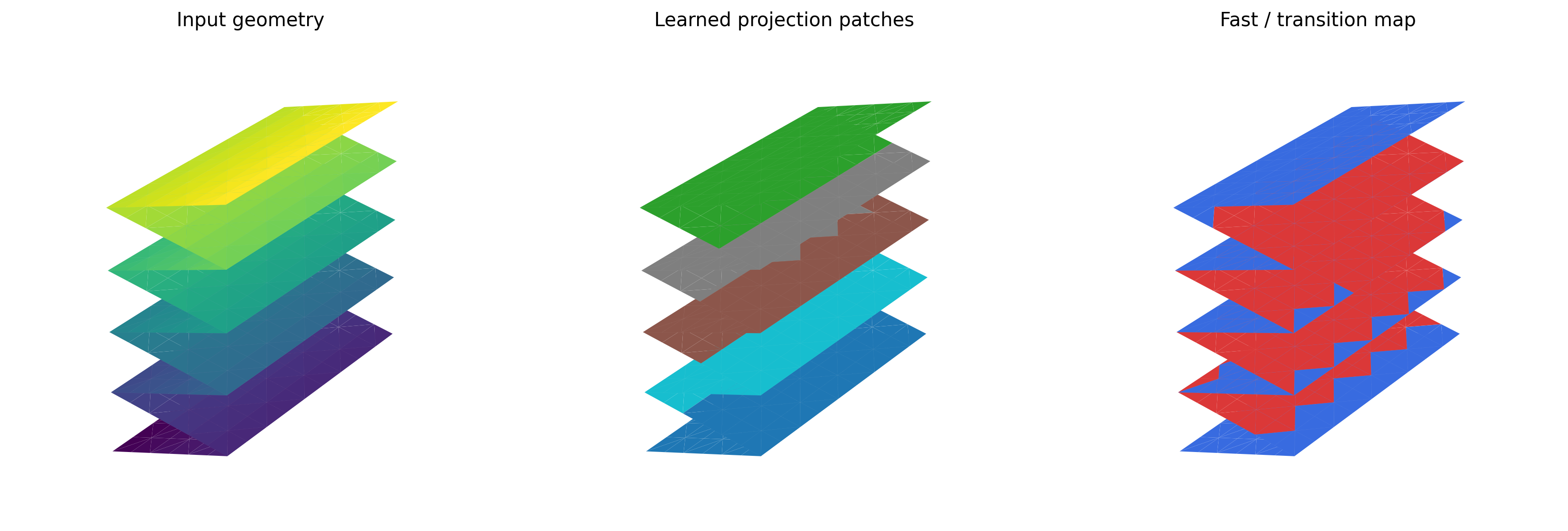}
    \caption{Accordion-ribbon qualitative check. From left to right: the
    self-approaching input geometry, projection-owner decomposition, and the
    empirical fast/transition map obtained by the same preprocessing used for
    the Swiss-roll experiments.}
    \label{fig:exp-accordion-geometry}
    \Description{Accordion-ribbon geometry, projection owners, and empirical
    fast/transition regions.}
\end{figure}

%  We use this example only
% as a qualitative representation check; no quantitative accuracy, runtime, or
% codimension claim in the main text depends on it.

\begin{figure}[t]
    \centering
    \includegraphics[width=0.9\linewidth]{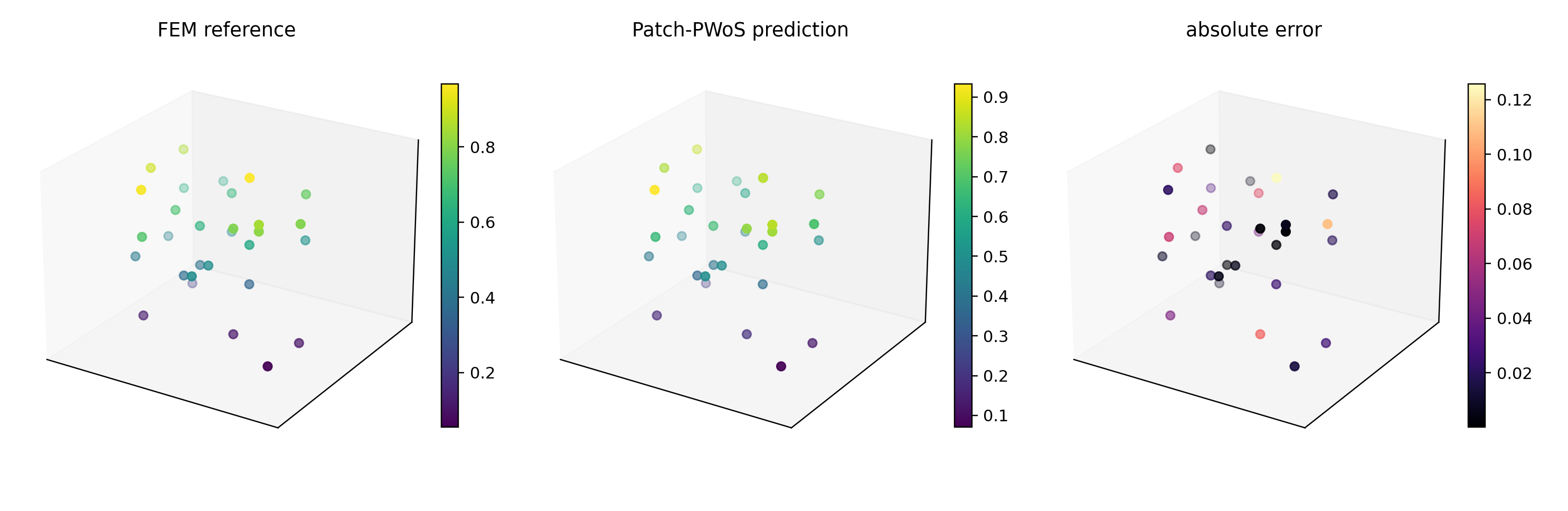}
    \caption{Qualitative PDE result on the accordion ribbon: reference solution,
    WoAP estimate, and absolute error at the same query construction.}
    \label{fig:exp-accordion-pde}
    \Description{Reference solution, WoAP estimate, and absolute error on the
    accordion ribbon.}
\end{figure}

\begin{figure}[t]
    \centering
    \includegraphics[width=0.9\linewidth]{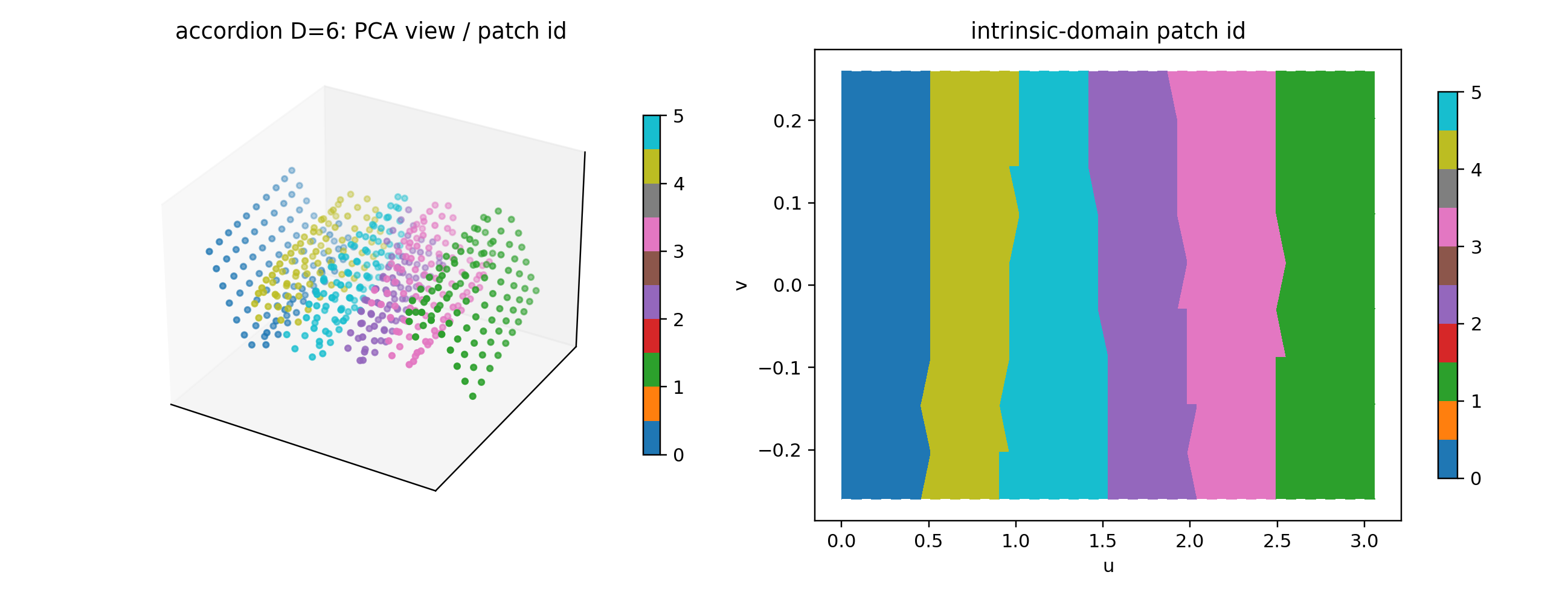}
    \caption{Lightweight $D=6$ re-embedding of the accordion ribbon, shown by
    a three-dimensional PCA view together with its intrinsic-domain projection
    states. This visualization is not used in the quantitative dimension
    sweep.}
    \label{fig:exp-accordion-highdim}
    \Description{High-dimensional accordion-ribbon re-embedding and projection
    states.}
\end{figure}
\end{document}